\documentclass[8pt,twocolumn]{extarticle}

\usepackage[square,numbers]{natbib}

\usepackage{booktabs}
\usepackage{hyperref}

\usepackage[libertine]{newtx}

\usepackage{authblk}

\usepackage{geometry}
\usepackage{caption,float}
\usepackage{titlesec}
\titleformat{\section}
	{\normalfont\large\bfseries}{\thesection}{1em}{}
\titleformat{\subsection}
	{\normalfont\large\bfseries}{\thesubsection}{1em}{}
\titleformat{\subsubsection}[runin]
	{\normalfont\normalsize\itshape}{\thesubsubsection}{1em}{}
\newcommand{\periodafter}[1]{#1.}
\titleformat{\paragraph}[runin]
	{\normalfont\normalsize\itshape}{\theparagraph}{1ex}{\periodafter}

\titlespacing*{\section}{0pt}{2ex}{1ex}
\titlespacing*{\subsection}{0pt}{2ex}{1ex}

\usepackage{fancyhdr}
\fancypagestyle{firstpage}{
  \fancyhf{}
  
  \fancyfoot[L]{\small \textcopyright~ \'{E}ric Goubault, Nuwan Herath Mudiyanselage, Sylvie Putot, 2026. This is the author's version of The Work. It is posted here for your personal use. Not for redistribution. The definitive version was published in \textit{29th ACM International Conference on Hybrid Systems: Computation and Control (HSCC ’26)}, \url{https://doi.org/10.1145/3801146.3805656}.}
}

\usepackage{tikz}

\newcommand{\real}{\mathbb{R}}
\newcommand{\R}{\mathcal{R}}

\newcommand{\oraclein}{\mathcal{O}^{\text{IN}}}
\newcommand{\oracleout}{\mathcal{O}^{\text{OUT}}}

\def\bfm#1{\protect{\makebox{\boldmath $#1$}}}

\DeclareMathAlphabet\mathbfcal{OMS}{cmsy}{b}{n}
\newcommand{\x}{\mathbf{x}}
\newcommand{\X}{\mathbb{X}}
\newcommand{\p}{\mathbf{p}}

\newcommand{\capitalp}{\mathbb{P}}

\newcommand{\z}{\mathbf{z}}
\newcommand{\D}{\mathbb{D}}

\newcommand{\0}{\mathbf{0}}

\newcommand{\List}{\mathcal{L}}
\newcommand{\Sin}{\Sigma^-}
\newcommand{\Sout}{\left(\Sigma^\complement\right)^-}
\newcommand{\Sdelta}{\Sigma^\Delta}
\newcommand{\G}{\mathbb{G}}

\DeclareMathOperator{\width}{width}

\newcommand{\Rin}{{\R}^-}
\newcommand{\Rout}{{\R}^+}
\newcommand{\Xin}{{\mathbb{X}}^-}
\newcommand{\Xout}{{\mathbb{X}}^+}

\usepackage{amsthm}
\theoremstyle{plain}
\newtheorem{thm}{Theorem}[section]
\newtheorem{lem}[thm]{\normalfont\scshape Lemma}

\theoremstyle{definition}
\newtheorem{exmp}{\normalfont\itshape Example}[section]
\theoremstyle{remark}
\newtheorem*{rem}{Remark}

\usepackage{epsfig}

\usepackage{algpseudocode}
\usepackage{algorithm}
\algnewcommand\algorithmicinput{\textbf{Input:}}
\algnewcommand\algorithmicoutput{\textbf{Output:}}
\algnewcommand\Input{\item[\algorithmicinput]}%
\algnewcommand\Output{\item[\algorithmicoutput]}%
\algnewcommand\algorithmicbreak{\textbf{break}}
\algnewcommand\Break{\algorithmicbreak}%

\usepackage{mathtools}

\usepackage{dashbox}

\usepackage{multirow}

\usepackage{diagbox}

\usepackage{pgfplots}
\pgfplotsset{compat=1.18}

\usepgfplotslibrary{fillbetween}

\begin{document}

\title{\Huge\sffamily\bfseries Solving quantified constraint satisfaction problems with applications to control}

\author{\href{https://orcid.org/0000-0002-3198-1863}{\'{E}ric Goubault}}
\author{\href{https://orcid.org/0000-0002-2937-5821}{Nuwan Herath Mudiyanselage}}
\author{\href{https://orcid.org/0000-0001-5624-3755}{Sylvie Putot}}
\affil{\small LIX, CNRS, École polytechnique, Institut Polytechnique de Paris, Palaiseau, France}
\affil{\texttt{\{goubault,herathmudiyanselage,putot\}@lix.polytechnique.fr}}
\date{}

\maketitle
\thispagestyle{firstpage}

\begin{abstract}
Many problems in control can be expressed as quantified constraints, such as robust forward and backward reachability, synthesis of controllers, solutions to motion planning problems. In this paper, we develop a method for computing inner and outer approximations of global solutions to very general quantified constraints, generalizing widely the state of the art on similar problems, usually restricted to at most one alternation of quantifiers or to  local solutions. By global solutions, we mean characterizing accurately the entire set of solutions, within some prescribed bounds. For this, our approach uses a set inversion technique relying on oracles for set inclusion using local solutions to related quantified reachability problems. We exemplify the approach throughout the paper on different examples related to control problems. 
\end{abstract}

\section*{CSS Concepts}

\textbullet~ \textbf{Theory of computation} \textrightarrow~ \textbf{Logic and verification}.

\section*{Keywords}
Set-based methods,  set inversion, quantified constraint satisfaction problem (QCSP), reachability, control

\section{Introduction}

\label{sec:introduction}

The solution set of a variety of problems can be expressed as sets constrained by arbitrary quantified formulas~\cite{ratschan2025applicationsquantifiedconstraintsolving}, or solution of quantified constraint satisfaction problems (QCSP). In the context of the  analysis of control systems, let us mention for example the problem of characterizing the region of the space for which their exists a controller leading to a target region whatever the disturbances in a given bounded set.
\cite{hscc24,nahs25} study the quantified problem of characterizing by inner and outer-approximations the set of states robustly reachable by some function, which can typically be the solution flow of a dynamical system. Each state of the robustly reachable states computable is required to be exactly reachable, whatever some disturbances: each $z$ of the set defined for a given time $t$ in Equation (\ref{ex-hscc}) 
\begin{equation}
\label{ex-hscc}
\{ z \mid \exists x_0 \in \mathbb{X}_0, \ \exists u \in \mathbb{U}, \forall w \in \mathbb{W},  \\ z=\varphi(t;x_0,u,w)\}
\end{equation}
should be exactly reachable whatever the disturbance $w$, which is obviously too strict a requirement in general to be possible. Some relaxations are thus studied instead, such as the time relaxation:
\begin{equation*}
  \{ z  \mid \exists x_0 \in \mathbb{X}_0, \ \exists u \in \mathbb{U},  \ \forall w \in {\mathbb W}, \ \exists s \in [0,T], \ 
   z=\varphi(s;x_0,u,w)\}
\end{equation*}
leading to more general quantified expressions.
In the present work, we rely on the inspiration and techniques of \cite{hscc24,nahs25} but study the inverse problem of a relaxed version. For instance, we want now to characterize in Equation~(\ref{eq:robtime2}), for $\G$ a given target or goal region,  the subset of the initial states $\mathbb{X}_0$ for which region $\G$ is robustly reachable:
\begin{equation}
\label{eq:robtime2}
   \{  x_0 \in \mathbb{X}_0 \mid \ \exists u \in \mathbb{U},  \ \forall w \in {\mathbb W}, \ \exists s \in [0,T], \ 
   \varphi(t;x_0,u,w) \in \G \}
\end{equation}
More generally, we  consider  the inverse problem of characterizing the subset of an input domain, whose image by any function defined as a quantified expression with any possible universal and existential quantifier alternation reaches a goal output. This can be used for various applications in control, (robust) forward and backward reachability, but also existence and synthesis of controllers satisfying some constraints, motion planning problems etc.
Nonetheless, we are not quantifying over functions here, which would be a much more intricate problem to solve. 

\paragraph{Problem statement}

\label{sec:problemstatement}
Let $f$ be a function from $\real^{n+2l}$ to $\real^m$. We want to characterize the set
\begin{equation}\label{eq:generalized_UE_solution_set}
    \Sigma = \left\lbrace \x \in \D \, | \, \forall \p_1 \in \capitalp_1, \dots, \exists \p_{2l} \in \capitalp_{2l}, f(\x,\p) \in \G \right\rbrace
\end{equation}
with the quantifiers $\forall$ and $\exists$ alternating $l$ times, and where $\D$, $\capitalp_1, \dots, \capitalp_{2l}$ and $\G$ are boxes (or intervals in one dimension). Note that this notation includes any alternation ($\forall\exists$, $\forall\exists\forall$, $\exists\forall$, \dots), since the results hold even when $\capitalp_k = \emptyset$.

We note $\capitalp$ the collection of domains $\capitalp_1,\ldots,\capitalp_{2l}$ together with their respective quantifiers $Q_1=\forall,\ Q_2=\exists,\dots,Q_{2l}=\exists$. When needed, interval $\capitalp_i$ (resp. $\X$) together with its quantifier $Q$ is written as $\capitalp_i^Q$ (resp. $\X^Q$). We note $\neg \capitalp$ the collection of domains $\capitalp_1,\ldots,\capitalp_{2l}$ with quantifiers $\neg Q_1=\exists,\ \neg Q_2=\forall, \dots,\neg Q_{2l}=\forall$. 
We note also $\Sigma^\complement$ the complement in $\D$ of $\Sigma$,  defined by: 
\begin{equation}\label{eq:generalized_UE_solution_set_complement}
    \Sigma^\complement = \left\lbrace \x \in \D \, | \, 
		\exists \p_1 \in \capitalp_1, \dots, \forall \p_{2l} \in \capitalp_{2l}, f(\x,\p) \in {\G}^\complement
        \right\rbrace .
\end{equation}

For a set $\mathbb{X}$, we will note $\Xin$ an inner-approximation of $\mathbb{X}$, such that $\forall x \in \Xin, x \in \mathbb{X}$, and $\Xout$ an outer-approximation of  $\mathbb{X}$,  such that $\forall x \in \mathbb{X}, x \in \Xout$.
In this work, relying on local inner and outer-approximations such as in~\cite{hscc24,nahs25}, we  compute global inner and outer-approximations of $\Sigma$.

\begin{exmp}[Running example]
\label{eq:runningexample}
We will exemplify different aspects of our approach on the following quantified problem:
\[
    \Sigma_1 = \left\lbrace x \in \real \, | \, \forall \p_1 \in \left[0,\frac14\right],  \ p^2_1-(x-1)(x-2)(x-3) \in \left[-\frac14,\frac14\right] \right\rbrace
\]
The solution set of this problem can be explained geometrically as follows: this is the set of $x$ such that the points on the cubic curve $(C): \ x\rightarrow -(x-1)(x-2)(x-3)$ have a value between the two horizontal lines $y=-1/4$ and $y=3/16$, as depicted in Figure \ref{fig:runex}. Indeed,  for $x$ to be in $\Sigma_1$, the curve 
$(C)$ at $x$ should take a value in $[-1/4,1/4]-p_1^2$, whatever the value of $p_1^2$ in $[0,1/16]$. This implies that $(C)$ at $x$ should lie in the intersection, for all $p_1^2 \in [0,1/16]$, of sets $[-1/4,1/4]-p_1^2$, which is equal to $[-1/4,3/16]$. 
Geometrically, we see that this will consist in  three disconnected interval solutions around 1, 2 and 3 respectively. Any local method would only be able to find a portion of the solution set within one of these three intervals. In this work, we characterize the full set of solutions, i.e. produce inner and outer approximations of these three intervals. 

\begin{figure}
\begin{center}
\begin{tikzpicture}[yscale=0.95]
  \begin{axis}[
      domain=0.5:3.5,
      samples=400,
      thick,
      grid=major,
      xlabel={$x$},
      ylabel={$y$},
      axis x line=bottom,     
      axis y line=left,
      legend style={at={(0.1,0.98)},anchor=north west,font=\small},
      ymin=-1, ymax=1.5,
    ]

    \addplot[black, very thick, name path=f] {-(x-1)*(x-2)*(x-3)};
    \addlegendentry{$f(x) = - (x - 1)(x - 2)(x - 3)$}

    \addplot[dashed, blue, thick, name path=ymin] {-1/4};
    \addlegendentry{$y = -\tfrac{1}{4}$}
    \addplot[dashed, red, thick, name path=ymax] {3/16};
    \addlegendentry{$y = \tfrac{3}{16}$}

\addplot[purple!70!black, ultra thick, domain=0.5:3.5, samples=200, 
    restrict expr to domain={-(x-1)*(x-2)*(x-3)}{-0.25:0.1875}] 
    {0};
\addlegendentry{$ \{x | - \frac14 \leq f(x) \leq \frac{3}{16} \} $}
  
\addplot[green!70!black, ultra thick, domain=0.5:3.5, samples=200, 
    restrict y to domain=-0.25:0.1875] 
    {-(x-1)*(x-2)*(x-3)};

    \addplot[fill=orange!30, opacity=0.4] fill between[of=ymin and ymax];

  \end{axis}
\end{tikzpicture}
\caption{Geometric illustration of the Example \ref{eq:runningexample}}
\label{fig:runex}
\end{center}
\end{figure}

\end{exmp}

\paragraph{Related work}
The applications of quantifier elimination to control design  are numerous \cite{QEcontrol,Jirstrand1997NonlinearCS}. 
Most existing works to solve quantified constraints satisfaction problems rely on quantifier elimination, meaning the symbolic computation of a quantifier-free equivalent form, in particular by means of Cylindrical Algebraic Decomposition~\cite{Collins}. 
 However, the practical applicability of symbolic quantifier elimination to control design and reachability problems is limited in practice. Firstly, because of the high complexity of quantifier elimination. Secondly, reachability is in general not solvable by algebraic methods, although some restricted forms of reachability can be investigated using semi-algebraic tools~\cite{Jirstrand1997NonlinearCS,sturm:hal-03142063,10.5555/646881.710601}. 

Other methods rely on numerical or set-based methods, aiming at computing sound approximations of the solution of quantified reachability problems for the analysis of dynamical systems, under the competing influence of control inputs and disturbances. Our work is part of this direction.  
A well-established framework for reachability analysis
is known as Hamilton-Jacobi techniques, that express reachable sets of a continuous-time dynamical system as the sublevel set of the Hamilton-Jacobi-Isaacs partial differential equation~\cite{10.1109/CDC.2017.8263977}.
These Hamilton-Jacobi methods can produce backward reachable sets~\cite{1463302} with robustness
properties that naturally produce quantified problems~\cite{10.1007/978-3-540-71493-4_34}. For linear time-invariant systems, more scalable set propagation techniques relying on constrained zonotopes
have been proposed to compute inner and outer-approximations of backward reachable sets~\cite{Wetzlinger_2025}. 

Constraint programming approaches have been used to solve some restricted cases of quantified constraint satisfaction problems~\cite{10.1007/3-540-45349-0_7, 10.1145/1183278.1183282}.
Quantified set inversion~\cite{GSJ01,Her06} also considers global inner and outer-approximations of quantified constraint satisfaction problems, but only with one alternation of quantifiers, either of the form $\forall \exists$ (with the so-called $*$-semantic extension) or $\exists \forall$ (with the so-called $**$-semantic extension). This is based on an extension of Kaucher arithmetic \cite{Kaucher} to deal with so-called ``multi-incident'' functions, that is, functions which uses the values of at least one of their input variables multiple times. Also, the work reported in \cite{GSJ01,Her06} is limited to functions which are explicitly given as rational functions (with potential extensions to some transcendental functions such as $sin$, $exp$ etc), which may require extra rewriting to be in ``tree-optimal" form, introducing also copies of the input variables in the multi-incident case. More general approximate quantifier elimination over the theory of Non-Linear Real Arithmetic has been considered in \cite{practicalapproximate}, using a combination of dynamic gridding with the application of Handelman's theorem to obtain a sequence of linear programs. 

Satisfiability Modulo Theory (SMT) also consider quantified logic, in particular 
dReal/dReach \cite{dReal} with support for some quantified SMT
modulo the theory of real numbers and modulo ODEs. Such SMT
solvers do not synthesize the set of states satisfying a quantified formula as we do, but can be used for checking this set is correct. The complexity is
also much higher that what we are proposing, and dReal/dReach is
limited to the exists-forall fragment.

The method we are proposing is not restricted to any number of quantifier alternations nor to explicit functions. We only need to be able to compute outer-approximations of the partial derivatives of these functions, which also enables the application of our method to flow functions given as solutions of differential equations, see e.g. Example \ref{ex:Dubbins}, and hence general reachability problems.

\paragraph{Contributions}

We rely on and extend the techniques of \cite{hscc24,nahs25}, for local inner and outer approximations of quantified equality problems, and use set inversion via interval analysis~\cite{JW93}, to compute global inner and outer-approximations of the inverse relaxed problem. For this, we:
\begin{itemize}
\item propose a partitioning algorithm based on a paving of  the free variables domain  $\D$ to characterize inner and outer approximations of the solution set, relying on the SIVIA set inversion algorithm~\cite{JW93} and generalizing the quantified set inversion of~\cite{HSVJ05}
\item propose oracles to prove whether a current set belongs to the solution set of the quantified problem, to its complement, or has an unknown status, relying on the local forward inner and outer-approximations of \cite{hscc24,nahs25}
\item demonstrate that we can partition also some of the parameter sets in the resolution,  depending on the quantifier alternation, to refine the precision,  thus avoiding  some further partitioning of the free variables domain $\D$.
\end{itemize}

\section{Background on local approximations}
\label{sec:localgr}

Previous work~\cite{lcss20,hscc24,nahs25} have proposed interval-based techniques to compute inner and outer approximations of quantified reachability sets
with $n$ alternations of quantifiers $\forall \exists$ of a form closely related to Equation~(\ref{eq:generalized_UE_solution_set}),
and defined for a function $f: \real^p \to \real^m$ and a box $\X = \X_1 \times \dots \times \X_{2n}$ by:
\begin{multline*}
R_{\bfm p}(f,\X)=\left\{ z \in \real^m \mid \forall \bfm{x}_1 \in \X_1, \ \exists \bfm{x}_2\in \X_2, \ \ldots,\right.\\
\left . \forall \bfm{x}_{2n-1} \in \X_{2n-1}, \exists \bfm{x_{2n}}
\in \X_{2n}, \ z=f(\bfm{x}_1,\bfm{x}_2,\ldots,\bfm{x}_{{2n}})\right\}
\end{multline*}
where $\bfm p$ is the sequence $(j_1, \dots, j_{2n})$ with $j_i= \dim \X_i$ and  $p = \sum_{i=1}^{2n} j_i$.

Relying on a generalized mean value theorem, Theorem~\ref{thm:robustAE} describes the computation of interval inner and outer-approximations in the simple case of one alternation of quantifiers $\forall \exists$
\begin{equation}
\label{eq:RAE}
R_{\forall\exists}(f,\X)=\left\{ z \in \real \mid \forall \bfm{x}_1 \in \X_1, \ \exists \bfm{x}_2\in \X_2, \ z=f(\bfm{x}_1,\bfm{x}_2)\right\}.
\end{equation}
This result will be used throughout the paper for the computations on the running example. 

For a continuously differentiable vector-valued function $f: \real^m \rightarrow \real^n$, we note $f_i$ its $i$-th component and $\nabla f = (\nabla_j f_i)_{ij} = (\frac{\partial f_i}{\partial x_j})_{1\leq i \leq n, 1 \leq j \leq m}$ its Jacobian matrix.
Interval quantities or boxes, whether scalar or vector-valued, will be noted using the set notation, e.g $\X=[\underline{x},\overline{x}]$. 
For a box $\X$, we note $c({\X}) = (\underline{x}+\overline{x})/2$ its  center  and  $r({\X}) = (\overline{x}-\underline{x})/2$ its radius.
\begin{thm}[Thm. 2, \cite{lcss20}]
\label{thm:robustAE}
Let $f$ be a continuously differentiable function from $\real^m$ to $\real$ and $\X = \X_1 \times \X_2$ a box. 
Let $\bfm{F^0}$, $\bfm{\nabla}_{1}$ and  $\bfm{\nabla}_{2}$ be vectors of intervals such that $ f(c(\X)) \subseteq \bfm{F^0}  $, $\{ \left| \nabla_1 f (x_1,x_2) \right|, \, x_1 \in \X_1, \, x_2 \in \X_2  \} \subseteq  \bfm{\nabla}_{1}$ and $\{ \left| \nabla_2 f (x_1,x_2) \right|, x_1 \in \X_1 , \, x_2 \in \X_2 \} \subseteq  \bfm{\nabla}_{2}$.
We have:
\begin{itemize}
     \item  an outer-approximation of $R_{\forall\exists}(f,\X)$:
  \[  [\underline{f^0} - \langle \overline{\nabla}_2 ,  r(\X_2) \rangle + \langle \underline{\nabla}_1 ,  r(\X_1) \rangle ,  \overline{f^0} + \langle \overline{\nabla}_2 , r(\X_2) \rangle - \langle \underline{\nabla}_1 , r(\X_1) \rangle ] 
  \]
  \item an inner-approximation of $R_{\forall\exists}(f,\X)$ when the interval is non-empty: 
\[ [ \overline{f^0}  - \langle \underline{\nabla}_2 , r({\X_2}) \rangle + \langle \overline{\nabla}_1 , r(\X_1) \rangle ,  \underline{f^0} +  \langle \underline{\nabla}_2 , r(\X_2) \rangle - \langle \overline{\nabla}_1 ,  r(\X_1) \rangle ]  \]  
\end{itemize}
\end{thm}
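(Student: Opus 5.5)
We prove both bounds with the mean value theorem around the centre $c(\X)=(c_1,c_2)$, where $c_i=c(\X_i)$, and then reason coordinate by coordinate through the quantifiers. Write $f^0=f(c(\X))\in\bfm{F^0}$, $r_1=r(\X_1)$ and $r_2=r(\X_2)$. For $x_1\in\X_1$, define the slice $g_{x_1}(x_2)=f(x_1,x_2)$ on the box $\X_2$. Since $\X_2$ is connected and $g_{x_1}$ is continuous, the set $\{f(x_1,x_2)\mid x_2\in\X_2\}$ is an interval $I(x_1)$. The quantified set is then $R_{\forall\exists}(f,\X)=\bigcap_{x_1\in\X_1} I(x_1)$. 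So we need a lower bound on this intersection for the inner-approximation and an upper bound for the outer-approximation.

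\textbf{Outer-approximation.} It suffices to exhibit one particular $x_1^\ast\in\X_1$ whose slice range $I(x_1^\ast)$ lies in the stated interval. For each coordinate $j$ of $x_1$, the sign of $\nabla_{1,j}f$ is constant on $\X$ whenever $\underline{\nabla}_{1,j}>0$, since $|\nabla_{1,j} f|\ge\underline{\nabla}_{1,j}>0$ and the derivative is continuous on the connected set $\X$. If $\underline{\nabla}_{1,j}=0$ the coordinate contributes nothing to the bound. We choose $x_{1,j}^\ast=c_{1,j}\pm r_{1,j}$ with the sign of the derivative, so that moving from $c_1$ to $x_1^\ast$ increases $f$ by at least $\langle\underline{\nabla}_1,r_1\rangle$. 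This gives $f(x_1^\ast,c_2)\ge f^0+\langle\underline{\nabla}_1,r_1\rangle$, and the mean value theorem in $x_2$ gives $f(x_1^\ast,x_2)\ge f(x_1^\ast,c_2)-\langle\overline{\nabla}_2,r_2\rangle$. With $f^0\ge\underline{f^0}$ this yields the stated lower bound. It does not contradict the upper side: for the upper bound we need a second choice $x_1^{\ast\ast}$ with the opposite signs, which decreases $f$ by at least $\langle\underline{\nabla}_1,r_1\rangle$. Then every $z$ in the intersection lies in $I(x_1^\ast)\cap I(x_1^{\ast\ast})$, which is contained in the stated interval. This handles $m=1$ output, as in the statement. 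Keeping the monotonicity argument coordinatewise, using a path that changes one coordinate at a time, is the routine part.

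\textbf{Inner-approximation.} Fix $z$ in the stated interval and any $x_1\in\X_1$. We must find $x_2$ with $f(x_1,x_2)=z$, and by connectedness it is enough to show $\min I(x_1)\le z\le\max I(x_1)$. First, by the mean value theorem in $x_1$, $|f(x_1,c_2)-f^0|\le\langle\overline{\nabla}_1,r_1\rangle$. Second, choosing the corners $x_2^\pm$ coordinatewise by the constant sign of $\nabla_{2,j}f$ (the same sign argument as above, now with $\underline{\nabla}_2$) gives
\[ f(x_1,x_2^+)\ge f(x_1,c_2)+\langle\underline{\nabla}_2,r_2\rangle,\qquad f(x_1,x_2^-)\le f(x_1,c_2)-\langle\underline{\nabla}_2,r_2\rangle. \]
Combining the two estimates, $\max I(x_1)\ge\overline{f^0}-\langle\overline{\nabla}_1,r_1\rangle+\langle\underline{\nabla}_2,r_2\rangle$, using $f^0\ge\overline{f^0}$ only if $\bfm{F^0}$ is a point. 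In general we must use $f^0\ge\underline{f^0}$ here, which gives the endpoint $\underline{f^0}+\langle\underline{\nabla}_2,r_2\rangle-\langle\overline{\nabla}_1,r_1\rangle$ for the upper end of the inner interval, as stated. Symmetrically, $\min I(x_1)\le\overline{f^0}-\langle\underline{\nabla}_2,r_2\rangle+\langle\overline{\nabla}_1,r_1\rangle$. Hence $z\in I(x_1)$ for every $x_1$, which is the $\forall\exists$ property.

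\textbf{Main obstacle.} The delicate step is justifying the corner choice. One has to show that the lower bound $\underline{\nabla}$ on $|\nabla f|$ really forces an increase along a coordinate path. This uses the fact that a strictly positive lower bound on $|\partial_j f|$ over a box forces a constant sign, so that integrating along the segment gives the linear gain. When $\underline{\nabla}_j=0$ the corresponding term vanishes and the choice of that coordinate is irrelevant. The remaining work is careful bookkeeping of the lower and upper endpoints of $\bfm{F^0}$.
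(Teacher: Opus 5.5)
Your proof is correct, but it follows a different route from the one the paper indicates. The paper does not reprove the statement. It quotes it from \cite{lcss20}, where it is obtained from a generalized interval mean value theorem of modal (Kaucher) interval analysis. There, $f$ is enclosed as $\bfm{F^0}$ plus derivative enclosures times the centred intervals $\X_i-c(\X_i)$, taken proper or improper according to the quantifier. The $\overline{\nabla}$ versus $\underline{\nabla}$ pattern then comes out of the Kaucher products of these enclosures with symmetric intervals (magnitude versus mignitude). This is the improper-interval bookkeeping visible in Example~\ref{ex:runningexamplebis}.

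You bypass that machinery with elementary real analysis. You write $R_{\forall\exists}(f,\X)=\bigcap_{x_1\in\X_1}I(x_1)$, with each slice $I(x_1)$ a compact interval. You use the ordinary mean value theorem to bound drifts by $\langle\overline{\nabla},r\rangle$. Finally, a positive lower bound on $|\partial_j f|$ over the connected box forces a constant sign, hence a guaranteed gain $\underline{\nabla}_j r_j$ when moving coordinate by coordinate to a suitable corner. What your version buys is a self-contained proof. It makes transparent why the outer bound needs two witnesses $x_1^\ast,x_1^{\ast\ast}$ (one per endpoint), and why only $\underline{f^0}$, resp.\ $\overline{f^0}$, is usable at each endpoint. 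What the Kaucher route buys is an algebraic, compositional rule. As in \cite{hscc24,nahs25}, it extends to several alternations and vector-valued $f$, where the ``slice is an interval'' and corner arguments no longer apply directly.

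There are a few presentational slips to fix. The opening sentence of your outer part (a single $x_1^\ast$ whose slice lies in the stated interval) is false as written, since one witness only controls one endpoint; the two-witness argument that follows is the right one. In the inner part, the intermediate bound $\max I(x_1)\ge\overline{f^0}-\dots$ should simply be deleted. Also, $\bfm{\nabla}_1,\bfm{\nabla}_2$ only enclose absolute values, so a lower endpoint may be negative. In that case, leaving the coordinate at the centre gives gain $0\ge\underline{\nabla}_j r_j$, so nothing in your argument changes.
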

Intuitively, the term corresponding to the existentially quantified variable $\x_2$ in Equation~(\ref{eq:RAE}) contributes to more freedom in its choice thus increasing the width of the reachable sets, 
while  the universally quantified $\x_1$ contributes to decreasing it. 
We refer to \cite{hscc24,nahs25} for solutions in the general case, which are necessary to implement the oracles of Section~\ref{sec:oracles}. 

\section{Global inner and outer-approximations of solutions to QCSP}

Local solutions as in Section \ref{sec:localgr}, are solutions within a neighborhood of some point, generally relying on some approximation only locally valid or accurate. 
We are now looking for global solutions, i.e. for the set of solutions in a large domain.  

\subsection{Paving-based algorithm}

Solving the global reachability problem of Section~\ref{sec:problemstatement} relying on the local reachability approaches introduced in Section~\ref{sec:localgr} can be done by  paving the free domain $\D$.
The local approach will be used on each subset of the current paving, to determine whether it is included in the solution set $\Sigma$, in its complement $\Sigma^\complement$ or has an unknown status, which will require refining the current subset. 

We  thus rely on a classical  paving algorithm \`a la SIVIA \cite{JW93} in Algorithm~\ref{alg:paver}, building on oracles $\oraclein(\X)$ (respectively $\oracleout(\X)$) to decide  whether $\X\subseteq \Sigma$ (respectively $\X \subseteq \Sigma^\complement$ or, equivalently, $\X \cap \Sigma = \emptyset$). These oracles will be described in Section~\ref{sec:oracles}.
In Section~\ref{sec:refinement} we will show that we can refine the algorithm and improve the quality of the oracles, by partitioning of the parameter space $\capitalp$, depending on the quantifiers. 

Note that the two partitioning algorithms are essentially orthogonal: the first one is there to ensure finding a global solution within the free domain $\D$ by collecting sufficiently many (box-like) subsets $\X$ of $\D$ such that $\X \subseteq \Sigma$. Whereas the second one is there to improve the quality of the oracles.

\begin{algorithm}
	\caption{Paving of $\Sigma$.}
	\label{alg:paver}
	\begin{algorithmic}[1]
		\Input A box $\X_0$ and the precision $\varepsilon$
		\Output A set of boxes $\Sin$ which are in $\Sigma$, a set of boxes $\Sout$ which are outside $\Sigma$, and a set of undefined boxes $\Sdelta$
		\State $\List \gets \{\X_0\}$
		\While {$\List$ not empty}
		    \State $\X \gets$ pop a box from $\List$
			\If {$\oraclein(\X)$}
				\State add $\X$ to $\Sin$
			\ElsIf {$\oracleout(\X)$}
				\State add $\X$ to $\Sout$
            \ElsIf {$\width(\X) \leq \varepsilon$}
				\State add $\X$ to $\Sdelta$
			\Else
				\State bisect $\X$ into $\X_1$ and $\X_2$
				\State add $\X_1$ and $\X_2$ into $\List$
			\EndIf
		\EndWhile
		\State \Return $\Sin$, $\Sout$, $\Sdelta$
	\end{algorithmic}
\end{algorithm}

\subsection{Oracles}
\label{sec:oracles}

We define
$\oraclein(\X)$ as true when
$ \X \subseteq \Sigma, $
which is equivalent to
\[
    \forall \x \in \X, \forall \p_1 \in \capitalp_1, \dots, \exists \p_{2l} \in \capitalp_{2l}, f(\x,\p) \in \G.
\]
\noindent and 
$\oracleout(\X)$ as true when $\X \subseteq \Sigma^\complement$ or
$\X \cap \Sigma = \emptyset$, 
which is equivalent to
\[
	\forall \x \in \X, \exists \p_1 \in \capitalp_1, \dots, \forall \p_{2l} \in \capitalp_{2l}, f(\x,\p) \in \G^\complement.
\]

We introduce the following parameterized set, associated with $\Sigma$ for a free domain $\X$,  quantified parameter domain $\capitalp$ as in the Problem Statement, Section \ref{sec:introduction}, 
a goal domain $\G$ and a function $f$: 
\begin{equation}\label{eq:R}
	\begin{split}
		\R(\X^Q, \capitalp, \G) = \lbrace \zeta \in \real^m \, | \, Q \x \in \X, \forall \p_1 \in \capitalp_1, \dots, \exists  \p_{2l} \in \capitalp_{2l},\\
		\exists \z \in \G, f(\x,\p) - \z = \zeta \rbrace
	\end{split}
\end{equation}

We will be using also sets $\R$ as above, but applied to $\X^Q$,  $\neg \capitalp$ (see again the Problem Statement, Section \ref{sec:introduction}) and $\G^\complement$, that we spell out below for the sake of clarity:
\begin{equation}\label{eq:Rop}
	\begin{split}
		\R(\X^Q, \neg \capitalp, \G^\complement) = \lbrace \zeta \in \real^m \, | \, Q \x \in \X, \exists \p_1 \in \capitalp_1, \dots, \forall  \p_{2l} \in \capitalp_{2l},\\
		\exists \z \in \G^\complement, f(\x,\p) - \z = \zeta \rbrace
	\end{split}
\end{equation}

Inner and outer-approximations $\R^-(\X^Q, \capitalp, \G)$ and $\R^+(\X^Q, \capitalp, \G)$ of the set $\R(\X^Q, \capitalp, \G)$ can be computed with the approach introduced in Section~\ref{sec:localgr}. 
They will be used to define two possible criteria for each of $\oraclein$ and $\oracleout$. 
These are given in Lemmas \ref{lem:infirst},  \ref{lem:outfirst}, \ref{lem:insecond} and \ref{lem:outsecond}, and summarized in Table~\ref{tab:criteria}.

\begin{table}
	\caption{Summary of the criteria.}
	\label{tab:criteria}
	\begin{tabular}{ccc}
		\toprule
		& Using $\capitalp$ and $\G$ & Using $\neg\capitalp$ and $\G^\complement$ \\
		\midrule
		$\oraclein$ & $\{\0\} \subseteq \Rin(\X^\forall,\capitalp,\G)$ & $\{\0\} \cap \Rout(\X^\exists,\neg\capitalp,\G^\complement) = \emptyset$ \\
		$\oracleout$ & $\{\0\} \cap \Rout(\X^\exists,\capitalp,\G) = \emptyset$ & $\{\0\} \subseteq \Rin(\X^\forall,\neg\capitalp,\G^\complement)$ \\
		\bottomrule
	\end{tabular}
\end{table}

\begin{lem}[$\oraclein$, using $\capitalp$ and $\G$]
\label{lem:infirst}
\[
	\{\0\} \subseteq \Rin(\X^\forall,\capitalp,\G) \implies \oraclein(\X).
\]
\end{lem}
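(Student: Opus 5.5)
The plan is a direct unfolding of definitions. The only earlier fact used is that $\Rin(\X^\forall,\capitalp,\G)$ is an inner-approximation of $\R(\X^\forall,\capitalp,\G)$.

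First, I would pass from the inner-approximation to the exact set. Assume $\{\0\} \subseteq \Rin(\X^\forall,\capitalp,\G)$. By definition of an inner-approximation, every point of $\Rin$ lies in $\R$, so $\0 \in \R(\X^\forall,\capitalp,\G)$.

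Second, I would spell out membership of $\0$ using Equation~(\ref{eq:R}) with $Q=\forall$ and $\zeta=\0$:
\[
\forall \x \in \X,\ \forall \p_1 \in \capitalp_1, \dots, \exists \p_{2l} \in \capitalp_{2l},\ \exists \z \in \G,\ f(\x,\p) - \z = \0 .
\]
The innermost clause $\exists \z \in \G,\ f(\x,\p)=\z$ is logically equivalent to $f(\x,\p) \in \G$, because it only asserts that some element of $\G$ equals $f(\x,\p)$. Substituting this equivalence under the unchanged prefix of quantifiers gives
\[
\forall \x \in \X,\ \forall \p_1 \in \capitalp_1, \dots, \exists \p_{2l} \in \capitalp_{2l},\ f(\x,\p) \in \G .
\]
This is exactly the formula given above as equivalent to $\X \subseteq \Sigma$, that is, $\oraclein(\X)$.

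I expect no real obstacle. The one point needing care is the innermost step, and it is sound because the auxiliary variable $\z$ appears only in that last conjunct. Its existential can therefore be collapsed into set membership without any interaction with the outer alternation of quantifiers. Empty parameter domains $\capitalp_k=\emptyset$ require no special treatment, since the same formal equivalence holds under the usual conventions for quantifiers over empty sets.
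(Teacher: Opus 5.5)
Your proposal is correct and follows the paper's own proof essentially step for step. You pass from $\Rin$ to $\R$ via the inner-approximation property, unfold Equation~(\ref{eq:R}) at $\zeta=\0$ with $Q=\forall$, and collapse the innermost $\exists \z \in \G,\ f(\x,\p)-\z=\0$ into $f(\x,\p)\in\G$ to conclude $\X\subseteq\Sigma$.
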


\begin{proof}
	Let us assume that $\{\0\} \subseteq \Rin(X^\forall, \capitalp, \G)$. Since by definition $\Rin(\X^\forall, \capitalp, \G) \subseteq \R(\X^\forall, \capitalp, \G)$, we have $\{\0\} \subseteq \R(\X^\forall, \capitalp, \G)$. In other words, according to Equation~(\ref{eq:R}),
	\[
		\forall \x \in \X, \forall \p_1 \in \capitalp_1, \dots, \exists \p_{2l} \in \capitalp_{2l}, \exists \z \in \G, f(\x,\p) - \z = \0.
	\]
	Subsequently,
	\[
		\forall \x \in \X, \forall \p_1 \in \capitalp_1, \dots, \exists \p_{2l} \in \capitalp_{2l}, f(\x,\p) \in \G
	\]
	According to Equation~(\ref{eq:generalized_UE_solution_set}), that means $X \subseteq \Sigma$.
\end{proof}

\begin{lem}[$\oracleout$, using $\capitalp$ and $\G$]
\label{lem:outfirst}
\[
	\{\0\} \cap \Rout(\X^\exists, \capitalp, \G) = \emptyset \implies \oracleout(\X).
\]
\end{lem}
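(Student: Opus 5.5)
I will argue by contraposition, mirroring the proof of Lemma~\ref{lem:infirst} but now using the outer-approximation property $\R(\X^\exists,\capitalp,\G) \subseteq \Rout(\X^\exists,\capitalp,\G)$. Suppose $\oracleout(\X)$ fails, i.e.\ $\X \cap \Sigma \neq \emptyset$. Then I will exhibit a point of $\{\0\} \cap \R(\X^\exists,\capitalp,\G)$, which by the outer-approximation property lies in $\{\0\} \cap \Rout(\X^\exists,\capitalp,\G)$. This contradicts the hypothesis.

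The key step is to unfold the definitions. Pick $\x^\ast \in \X \cap \Sigma$. By Equation~(\ref{eq:generalized_UE_solution_set}),
\[
\forall \p_1 \in \capitalp_1, \dots, \exists \p_{2l} \in \capitalp_{2l},\ f(\x^\ast,\p) \in \G .
\]
Whenever $f(\x^\ast,\p)\in\G$ holds at the innermost level, we may set $\z = f(\x^\ast,\p) \in \G$, giving $f(\x^\ast,\p) - \z = \0$. Since this rewriting happens underneath the whole quantifier prefix, the formula
\[
\forall \p_1 \in \capitalp_1, \dots, \exists \p_{2l} \in \capitalp_{2l},\ \exists \z \in \G,\ f(\x^\ast,\p)-\z=\0
\]
holds. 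Because $\x^\ast \in \X$, it witnesses the leading existential quantifier $\exists \x \in \X$ in Equation~(\ref{eq:R}) with $Q=\exists$. Hence $\0 \in \R(\X^\exists,\capitalp,\G)$.

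Finally, since $\Rout$ is an outer-approximation, $\0 \in \Rout(\X^\exists,\capitalp,\G)$, contradicting $\{\0\}\cap\Rout(\X^\exists,\capitalp,\G)=\emptyset$. Therefore $\X\cap\Sigma=\emptyset$, which is equivalent to $\X \subseteq \Sigma^\complement$, i.e.\ $\oracleout(\X)$.

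I do not expect a real obstacle. The only care needed is to justify that replacing the innermost matrix $f(\x,\p)\in\G$ by the equivalent $\exists \z\in\G,\ f(\x,\p)-\z=\0$ preserves truth under an arbitrary alternating prefix. This holds because the two matrices are pointwise equivalent for every $(\x,\p)$, and quantifiers respect logical equivalence of their bodies.
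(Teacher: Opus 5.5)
Your proof is correct and follows the paper's argument. Both use $\R(\X^\exists,\capitalp,\G)\subseteq\Rout(\X^\exists,\capitalp,\G)$ and unfold Equation~(\ref{eq:R}) at $\zeta=\0$. The only difference is that you argue by contraposition with an explicit witness $\x^\ast\in\X\cap\Sigma$, whereas the paper argues directly that $\exists \x\in\X, \forall \p_1\in\capitalp_1,\dots, f(\x,\p)\in\G$ fails.
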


\begin{proof}
	Let us assume that $\{\0\} \cap \Rout(\X^\exists, \capitalp, \G) = \emptyset$. 
    
    As $\R(\X^\exists, \capitalp, \G) \subseteq \Rout(\X^\exists, \capitalp, \G)$, it implies that $\{\0\} \cap \R(\X^\exists, \capitalp, \G) = \emptyset$. In other words, according to Equation~(\ref{eq:R}),
	\[
		\exists \x \in \X, \forall \p_1 \in \capitalp_1, \dots, \exists \p_{2l} \in \capitalp_{2l}, \exists \z \in \G, f(\x,\p) - \z = \0
	\]
    \noindent i.e.
    \[
		\exists \x \in \X, \forall \p_1 \in \capitalp_1, \dots, \exists \p_{2l} \in \capitalp_{2l}, f(\x,\p) \in \G
	\]
	is not satisfied, hence $X \cap \Sigma =\emptyset$.
\end{proof}

Dually, we can use $\neg \capitalp$ and $\G^\complement$ to decide $\oraclein$ and $\oracleout$. The proofs are dual to the ones above and we do not repeat them. 
\begin{lem}[$\oraclein$, using $\neg \capitalp$ and $\G^\complement$]
\label{lem:insecond}
\[
	\{\0\} \cap \Rout(\X^\exists, \neg \capitalp, \G^\complement)=\emptyset \implies \oraclein(\X).
\]
\end{lem}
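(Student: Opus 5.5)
We argue exactly as in the proof of Lemma~\ref{lem:outfirst}, with $\capitalp$ replaced by $\neg\capitalp$ and $\G$ by $\G^\complement$, and then close with one negation step. Assume $\{\0\} \cap \Rout(\X^\exists,\neg\capitalp,\G^\complement)=\emptyset$.

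First, since $\R(\X^\exists,\neg\capitalp,\G^\complement) \subseteq \Rout(\X^\exists,\neg\capitalp,\G^\complement)$, we get $\0 \notin \R(\X^\exists,\neg\capitalp,\G^\complement)$. Unfolding Equation~(\ref{eq:Rop}) with $Q=\exists$, the following statement is false:
\[
\exists \x \in \X, \exists \p_1 \in \capitalp_1, \dots, \forall \p_{2l} \in \capitalp_{2l}, \exists \z \in \G^\complement, f(\x,\p)-\z=\0 .
\]
Second, the innermost part $\exists \z \in \G^\complement,\ f(\x,\p)-\z=\0$ is equivalent to $f(\x,\p)\in\G^\complement$, because $\z$ is then forced to equal $f(\x,\p)$. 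So the false statement is
\[
\exists \x \in \X, \exists \p_1 \in \capitalp_1, \dots, \forall \p_{2l} \in \capitalp_{2l}, f(\x,\p)\in\G^\complement .
\]

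Third, we negate this statement. Pushing the negation through the prefix flips every quantifier, which turns $\neg\capitalp$ back into $\capitalp$. It also turns $\exists \x$ into $\forall \x$, and $f(\x,\p)\notin\G^\complement$ into $f(\x,\p)\in\G$. The result is
\[
\forall \x \in \X, \forall \p_1 \in \capitalp_1, \dots, \exists \p_{2l} \in \capitalp_{2l}, f(\x,\p)\in\G ,
\]
which is exactly the characterization of $\oraclein(\X)$ given at the start of Section~\ref{sec:oracles}.

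The only step needing care is this final negation. It is purely logical, because all quantifiers range over fixed sets with no dependence between domains, and it remains valid when some $\capitalp_k=\emptyset$, where the vacuous quantifiers dualize correctly. Everything else is direct unfolding of definitions, so I expect no real obstacle.
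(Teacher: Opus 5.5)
Your proof is correct and follows the paper's route. The paper omits the proof of Lemma~\ref{lem:insecond} as being dual to that of Lemma~\ref{lem:outfirst}, and you carry out exactly that dualization: the inclusion $\R(\X^\exists,\neg\capitalp,\G^\complement)\subseteq\Rout(\X^\exists,\neg\capitalp,\G^\complement)$, unfolding Equation~(\ref{eq:Rop}) with $Q=\exists$, then a De Morgan negation of the whole quantifier prefix, which restores $\capitalp$ and $\G$ and yields the characterization of $\oraclein(\X)$. As in the paper, that characterization tacitly uses $\X\subseteq\D$, which holds whenever $\X$ comes from a paving of $\D$.
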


\begin{lem}[$\oracleout$, using $\neg \capitalp$ and $\G^\complement$]
\label{lem:outsecond}
\[
	\{\0\} \subseteq \Rin(\X^\forall, \neg \capitalp,\G^\complement) \implies \oracleout(\X).
\]
\end{lem}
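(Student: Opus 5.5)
The argument mirrors the proof of Lemma~\ref{lem:infirst}, with $\neg\capitalp$ and $\G^\complement$ in place of $\capitalp$ and $\G$, and with the universal quantifier on $\x$ in place of the existential one.

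First, assume $\{\0\} \subseteq \Rin(\X^\forall, \neg\capitalp, \G^\complement)$. By definition of an inner-approximation, $\Rin(\X^\forall, \neg\capitalp, \G^\complement) \subseteq \R(\X^\forall, \neg\capitalp, \G^\complement)$, so $\0 \in \R(\X^\forall, \neg\capitalp, \G^\complement)$.

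Second, unfold this membership using Equation~(\ref{eq:Rop}) with $Q = \forall$:
\[
	\forall \x \in \X, \exists \p_1 \in \capitalp_1, \dots, \forall \p_{2l} \in \capitalp_{2l}, \exists \z \in \G^\complement, f(\x,\p) - \z = \0 .
\]

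Third, the innermost clause $\exists \z \in \G^\complement,\ f(\x,\p) = \z$ is equivalent to $f(\x,\p) \in \G^\complement$, since we can take $\z = f(\x,\p)$. Replacing it inside the quantifier prefix gives
\[
	\forall \x \in \X, \exists \p_1 \in \capitalp_1, \dots, \forall \p_{2l} \in \capitalp_{2l}, f(\x,\p) \in \G^\complement .
\]
This is exactly the characterization of $\oracleout(\X)$ given in Section~\ref{sec:oracles}. Equivalently, by Equation~(\ref{eq:generalized_UE_solution_set_complement}), every $\x \in \X$ lies in $\Sigma^\complement$, so $\X \cap \Sigma = \emptyset$.

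The only step needing a word of justification is the equivalence between $\oracleout(\X)$ and this formula, that is, that $\Sigma^\complement$ as defined in (\ref{eq:generalized_UE_solution_set_complement}) really is the set-theoretic complement of $\Sigma$ in $\D$. I would justify it by pushing the negation of the defining formula of $\Sigma$ through the quantifier prefix: each $\forall$ becomes $\exists$ and vice versa, and $f(\x,\p)\in\G$ becomes $f(\x,\p)\in\G^\complement$.

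The same argument applies when some $\capitalp_k$ are empty. A quantifier over an empty domain is interpreted as vacuous, i.e.\ the corresponding block is dropped from the prefix. Since the rewriting above acts only on the innermost clause, it is unaffected. I expect no real obstacle; all content lies in the soundness of $\Rin$, which is provided by the local methods of Section~\ref{sec:localgr}.
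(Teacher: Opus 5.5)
Your proof is correct and is exactly the duality the paper invokes when it omits this proof as dual to Lemma~\ref{lem:infirst}: pass from $\Rin$ to $\R$ by soundness of the inner-approximation, unfold Equation~(\ref{eq:Rop}) with $Q=\forall$, and eliminate $\z$ to obtain precisely the formula that defines $\oracleout(\X)$. One small slip in your framing: Lemma~\ref{lem:infirst} already quantifies $\x$ universally, so relative to it the only changes are $\capitalp\to\neg\capitalp$ and $\G\to\G^\complement$, with no swap of the quantifier on $\x$.
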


\subsection{Sample calculations and discussion}

In this section, we exercise the different criteria given by Lemma \ref{lem:infirst}, \ref{lem:outfirst}, \ref{lem:insecond} and \ref{lem:outsecond} on the running Example \ref{eq:runningexample}. 

We demonstrate how they can be computed using simple 0th order inner and outer approximation methods of Theorem \ref{thm:robustAE}. 
In that particular case, the criteria based on $\capitalp$, $\G$ and on $\neg \capitalp$, $\G^\complement$ agree. This is exemplified in Section \ref{sec:sampleoraclein} with $\oraclein$ for interval $[17/18,19/18]$, which is proven to be in the solution set $\Sigma_1$ by both criteria and in Section \ref{sec:sampleoracleout} for interval $[0,1/2]$ which is proven to be outside the solution set $\Sigma_1$. 
The intervals $[17/18,19/18]$, $[0,1/2]$ and their positioning with respect to the first component of the solution set are illustrated on Figure~\ref{fig:runex2},  a zoomed view of Figure~\ref{fig:runex}.

\begin{figure}
\begin{tikzpicture}[yscale=0.5]
  \begin{axis}[
      domain=0.:1.5,
      samples=400,
      thick,
      grid=major,
      xlabel={$x$},
      ylabel={$y$},
      axis x line=bottom,     
      axis y line=left,
      legend style={at={(0.1,0.98)},anchor=north west,font=\small},
      ymin=-0.3, ymax=0.3,
    ]

    \addplot[black, very thick, name path=f] {-(x-1)*(x-2)*(x-3)};

    \addplot[dashed, red, thick, name path=ymin] {-1/4};
    \addplot[dashed, red, thick, name path=ymax] {3/16};

\addplot[purple!70!black, ultra thick, domain=0.:1.5, samples=200, 
    restrict expr to domain={-(x-1)*(x-2)*(x-3)}{-0.25:0.1875}] 
    {0};
  
\addplot[green!70!black, ultra thick, domain=0.:1.5, samples=200, 
    restrict y to domain=-0.25:0.1875] 
    {-(x-1)*(x-2)*(x-3)};
  
    \addplot[fill=orange!30, opacity=0.4] fill between[of=ymin and ymax];

\draw[very thick,blue!70!black,-] (0.9444,0.05) -- (1.055555,0.05) node[above]{$[17/18,19/19]$};


\draw[very thick,blue!70!black,-] (0.,0.05) -- (0.5,0.05) node[above,midway]{$[0,1/2]$};

  \end{axis}
\end{tikzpicture}
\caption{Sample oracle computations on Example \ref{eq:runningexample}}
\label{fig:runex2}
\end{figure}

\subsubsection{Oracle $\oraclein$}

\label{sec:sampleoraclein}

\begin{exmp}
\label{ex:runningexamplebis}
We apply the criterion of Lemma~\ref{lem:infirst} on Example \ref{eq:runningexample}, for $\X=[17/18,19/18]$. We want an inner-approximation of
\begin{multline*}
		\R(\X^\forall, [0,1/4]^\forall, [-1/4,1/4]) = \lbrace \zeta \in \real \, | \, \forall x \in \X, \forall p_1 \in [0,1/4], \\
		\exists z \in [-1/4,1/4], f(x,p_1) - z = \zeta \rbrace
	\end{multline*}
    \noindent with $f(x,p_1)=p^2_1-(x-1)(x-2)(x-3)$.
Let us apply Theorem~\ref{thm:robustAE} on $g$ defined by $g(x,p_1,z)=f(x,p_1)-z$. We first compute $|\nabla g|$ and bound it on $([17/18,19/18],[0,1/4],[-1/4,1/4])$:
\begin{eqnarray*}
|\frac{\partial g}{\partial x}| &=& |-3(x-2)^2 +1| \subseteq [181/108,253/108]\\
|\frac{\partial g}{\partial p_1}|&=&|2p_1| \subseteq [0,1/2]  \\
|\frac{\partial g}{\partial z}|&=&|-1| \subseteq [1,1]
\end{eqnarray*}
For the central value $(x_0,(p_1)_0,z_0)=(1,1/8,0)$, the value of $g$ is $1/64$. 
Thus, an inner approximation $\R^-(\X^\forall, [0,1/4]^\forall, [-1/4,1/4])$ is given by: 
\begin{multline*}
1/64+1/18[253/108,-253/108]+1/8[1/2,-1/2]+1/4[-1,1]\\=[-649/15552,1135/15552]
\end{multline*}
which contains $0$, hence $[17/18,19/18] \subseteq \Sigma_1$. In this computation, the first two intervals, corresponding to the dependency on the universal quantifiers, are adversarial, with lower bound greater than the upper bound, and can be interpreted as Minkowski difference.

Note  that we are unable to find directly a local solution to Example \ref{eq:runningexample} using directly Theorem \ref{thm:robustAE} or more generally the approach of ~\cite{hscc24,nahs25}. Indeed, variable $x$ is defined implicitly in this example and not explicitly, as in  Equation \ref{eq:RAE}. 

Finally, the method from \cite{GSJ01,Her06} using the *-semantic extension of function $g(x,p_1,z)=f(x,p_1)-z=p_1^2-(x-1)(x-2)(x-3)-z$, rewritten in true-optimal, uni-incident form as required by this method, $p_1^2-x_1^3+6x_2^2-11x_3+6-z$ gives: 
\begin{multline*}
[0,1/4]^2-[17/18,19/18]^3+6[19/18,17/18]^2\\-11[17/18,19/18]+6-[1/4,-1/4] 
 = [2.5002,-0.06696]
\end{multline*}
\noindent by noticing that $g$ is decreasing with respect to $x$, $x_1$ and $x_3$, and  increasing in $x_2$. This interval is improper and contains 0, hence $[17/18,19/18]$ is proven to be included in $\Sigma_1$ as well by this method.  
\end{exmp}

\begin{exmp}
\label{ex:runningoutfirst}
We now apply the criterion of Lemma \ref{lem:insecond}. 
to the running example, again on $\X=[17/18,19/18]$. 
We have to compute an outer-approximation of 
\begin{multline*}
		\R(\X^\exists, [0,1/4]^\exists, [-1/4,1/4]^\complement) = \lbrace \zeta \in \real \, | \, \exists x \in \X, \exists p_1 \in [0,1/4], \\
		\exists z \in [-1/4,1/4]^\complement, f(x,p) - z = \zeta \rbrace
	\end{multline*}
    \noindent with $f(x,p)=p^2_1-(x-1)(x-2)(x-3)$.
Using the computation of the partial derivatives of $g$ done in Example \ref{ex:runningexamplebis}, 
we get this time a sum of classical intervals (with a disjunction to represent the complement of $[-1/4,1,4]$), as all quantifiers are existential:
\begin{multline*}
1/64+1/18[-253/108,253/108]+1/8[-1/2,1/2]+\\
(-\infty,-1/4)\cup (1/4,\infty)=
(-\infty,-649/15552) \cup (1135/15552,\infty)
\end{multline*}

\noindent Therefore $\{0\}$ does not intersect $\R^+(\X^\exists,\neg \capitalp,\G^\complement)$ and $[17/18,19/18]$ is also proven to be in $\Sigma_1$ by the criterion of Lemma \ref{lem:insecond}.

We notice in this example that the criterion of Lemma \ref{lem:insecond}, with the inner and outer-approximation method of Theorem \ref{thm:robustAE} agrees with the one of Lemma \ref{lem:infirst} in that case. Indeed, it computes the complement set of the set $\R^-$ computed with the criterion of Lemma \ref{lem:infirst}. This is discussed in Section \ref{sec:comparison}.  

\end{exmp}

\subsubsection{Oracle $\oracleout$}

\label{sec:sampleoracleout}

\begin{exmp}
\label{ex:ex38}
\label{ex:runningoutsecond}
We now apply the criterion of Lemma \ref{lem:outsecond}  with $\X=[0,1/2]$. We need to compute an inner-approximation of:
\begin{multline*}
\R(\X^\forall,[0,1/4]^\exists,[-1/4,1/4]^\complement)=\lbrace \zeta \in \real \, | \, \forall x \in \X, \exists p_1 \in [0,1/4], \\
		\exists z \in [-1/4,1/4]^\complement, f(x,p) - z = \zeta \rbrace
	\end{multline*}
   \noindent with $\X=[0,1/2]$.
The central value of $g$ for this $\X$ is $g(1/4,1/8,0)=29/8$ and the bounds for the partial derivatives of $g$ need to be evaluated on this new domain, in particular $|\frac{\partial g}{\partial x}| \subseteq [23/4,11]$.
Using Theorem~\ref{thm:robustAE}, we get
\begin{multline*}
   \R^-(\X^\forall,[0,1/4]^\exists,[0,1/4]^\complement)=29/8+1/4[11,-11]+1/8[0,0]\\+(-\infty,-1/4)\cup(1/4,\infty)
   =(-\infty, 5/8)\cup (53/8,\infty)
   \end{multline*}
By Lemma \ref{lem:outsecond} we conclude that $\X =[0,1/2]$ is included in $\Sigma_1^\complement$.
\end{exmp}

\subsubsection{Comparison of the various criteria}
\label{sec:comparison}
As the running example illustrates, for the particular case of  dimension 1 with the order 0 method of Theorem \ref{thm:robustAE} (and its extension in \cite{hscc24,nahs25}), 
the dual criteria, Lemma \ref{lem:insecond} and Lemma \ref{lem:outsecond} of Lemma \ref{lem:infirst} and Lemma \ref{lem:outfirst} respectively, are exactly equivalent. 

In higher-dimension, for $m > 1$, and the order 0 method of Theorem \ref{thm:robustAE}, criteria of Lemma \ref{lem:outsecond} and Lemma \ref{lem:infirst} are in general disadvantaged over the ones given in Lemmas \ref{lem:outfirst} and \ref{lem:insecond}. This is because they both use inner-approximations that are abstracted in a  coarser way than outer-approximations, due to the quantifier relaxation method of \cite{hscc24} that requires to further inner-approximate, with different quantifiers, the different components of $\z$. This relaxation is no longer required with higher-order methods such as the ones of \cite{nahs25}, and in particular with the order 1 method using zonotopes. But because higher-order methods abstractions may lose precision differently for outer-approximations and inner-approximations, there is a priori no reason that the dual criteria should always agree, although inner-approximations are usually more difficult to compute than over-approximations. 

\subsubsection{An application to backward reachability}

\label{ex:Dubbins}

As mentioned in the introduction, the approach developed here is well suited not only for computing forward quantified reachable sets, as in \cite{hscc24}, but also backward quantified reachable sets. Moreover, and this will be illustrated on an example in this section, it can  tackle cases in which the function $f$ of the problem defined in Equation (\ref{eq:generalized_UE_solution_set}) is not polynomial, and is not even given analytically at all. 

We consider Example 1 of \cite{hscc24}, which is a simplification of the model from \cite{lcss20}, modeling a Dubins vehicle with uncertainties on the $x$ axis: $\dot{x}=v cos(\theta)+b_1$, $\dot{y}=v sin(\theta)$, $\dot{\theta}=a$. 
We suppose that the speed $v$ is equal to 1 and we have a control period of $t=0.5$. 
The  initial conditions are uncertain given in $\mathbb{X}_0=\{(x,y,\theta) \ \mid \ x \in [-0.1,0.1], \ y \in [-0.1,0.1], \ \theta \in [-0.01,0.01]\}$, the control $a$ can take values in $\mathbb{U}=[-0.01,0.01]$ and disturbance $b_1$ can take values in $\mathbb{W}=[-0.01,0.01]$. Both control and disturbance are supposed to be constant over the control period $[0,0.5]$. 

The quantified constraint satisfiability problem we consider is:
\begin{multline*} 
\Sigma=\left\lbrace x_0 \in [-0.1,0.1] \ \mid \ \exists a\in [-0.01,0.01], \ \exists y_0 \in [-0.1,0.1], \right. \\ \exists \theta_0\in [-0.01,0.01], \ \forall b_1 \in [-0.01,0.01], \ \exists t \in [0,0.5],\\ \left. \varphi_x(t;x_0,y_0,\theta_0,a,b_1)\in \G \right\rbrace
\end{multline*}
\noindent where $\G=[0.49,1]$ 
and $\varphi_x$ is the $x$ component of the flow function $\varphi$. The set $\Sigma$ is the set of initial conditions on  $x$ such that 
the Dubins vehicle, with adversarial perturbation $b_1$, will eventually reach an $x$ coordinate between $0.49$ and $1$ in a time between $0$ and $0.5$. 

Using simple outer-approximation methods~\cite{hscc24,nahs25}, we easily get estimates for a ``central trajectory" of the flow of the differential system, i.e. of the trajectory starting at $x=0$, $y=0$, $\theta=0$, 
$b_1=0$ and $a=0$. The central trajectory plays the role of $f^0$ in Theorem \ref{thm:robustAE}. 

We also recap below the enclosures of the Jacobian $J=(J_{x,q})_q$ of the $x$ component $\varphi_x$ of the flow function $\varphi$ with respect to all parameters and initial conditions $q$, more details can be found in \cite{nahs25}. 
For brevity's sake, instead of giving the Jacobian estimates, we directly provide the lower bounds $I_{x,q}$ for $\langle \underline{J}_{x,q} ,  r(\bf q) \rangle$ (which plays the role of 
$\langle \underline{\nabla} ,  r(\X) \rangle$ in Theorem \ref{thm:robustAE}) and upper bounds $O_{x,q}$ for $\langle \overline{J}_{x,q} ,  r(\bf q) \rangle$ (which plays the role of 
$\langle \overline{\nabla} ,  r(\X) \rangle$ in Theorem \ref{thm:robustAE}). These coefficients are instrumental in the computation of the $\R^-$ and $\R^+$ sets of the criteria of Lemmas \ref{lem:infirst}, \ref{lem:insecond}, \ref{lem:outfirst} and \ref{lem:outsecond}, using the 0th order method of Theorem 1 of \cite{hscc24}, which extends Theorem \ref{thm:robustAE} of Section \ref{sec:localgr} for general quantified problems:

\begin{itemize}
    \item the central trajectory is given by $x_c=t$, $y_c=0$ and $\theta_c=0$, therefore, it is going to be $0$ at time 0
    \item $I_{x,t}=[0,0.495]$, $O_{x,t}=[0,0.505]$, 
    $I_{x,a}=0$, \\$O_{x,a}=[-6.545 \ 10^{-7},6.545 \ 10^{-7}]$, 
    $I_{x,y_0}=O_{x,y_0}=0$, \\
$I_{x,\theta_0}=0$, $O_{x,\theta_0}=[-1.309 \ 10^{-6}, 1.309 \ 10^{-6}]$, \\
$I_{x,b_1}=0$, $O_{x,b_1}=[-0.005,0.0$ $05]$ 
\end{itemize} 

We now show that $[0.05,0.1] \in \Sigma$ and $[-0.1,0] \in \Sigma^\complement$ using, respectively, criteria of Lemma \ref{lem:infirst} and of Lemma \ref{lem:outfirst}. Given that $\capitalp$ is $\exists a\in [-0.01,0.01], \ \exists y_0 \in [-0.1,0.1], \ \exists \theta_0\in [-0.01,0.01], \ \forall b_1 \in [-0.01,0.01], \ \exists t \in [0,0.5]$, and $\G=[0.49,1]$, we compute: 
\begin{multline*}
\R^-([0.05,0.1]^\forall,\capitalp,\G)=0+[0.1,0.05]+0+0\\
+0+[0.005,-0.005]+0+[-1,-0.49]=[-0.895,0.005]
\end{multline*}
\noindent hence the condition of criterion of Lemma \ref{lem:infirst} is satisfied, and $[0.005,0.1]\in \Sigma$. Now, we compute, 
\begin{multline*}
\R^+([-0.1,0]^\exists,\capitalp,\G)=0+[-0.1,0]+0+0+0+\\
[-0.005,0.005]+[0.495,0]+[0.49,1] =[0.88,1.005]
\end{multline*}
\noindent which does not intersect $\{0\}$, therefore, by the criterion of Lemma \ref{lem:outfirst}, $[-0.1,0]\in \Sigma^\complement$.

\section{Improvements of the paving algorithm}

In this section, we show that we can improve the precision of the oracles by partitioning also the parameter space $\capitalp$. 
\label{sec:refinement}

We begin by exposing some basic logical facts in Lemma \ref{lem:weaken}. We will then rely on these facts  in Section~\ref{sec:subdiv1} and Section~\ref{sec:subdiv2} to define sound parameter subdivision based refinements of the approximations of sets $\R^-$ and $\R^+$. 

\begin{lem}
\label{lem:weaken}
Let $\Phi$ be a quantified formula in the first-order theory of real numbers, $\capitalp_1, \dots, \capitalp_{j}$ be subsets of real numbers, and $x_1,\dots,x_j$ be variables for $j\geq 1$. Then for all sets $\capitalp_j^1,\dots,\capitalp_j^{k}$ with $\capitalp_j\supseteq \bigcup\limits_{i=1}^k \capitalp_j^i$: 
\begin{enumerate}
\item $\left(\exists x \in \capitalp_j, \ \Phi\right)\Longleftarrow \bigvee\limits_{i=1}^k \left(\exists x \in \capitalp_j^i, \ \Phi\right)$, and is an equivalence if $\capitalp_j= \bigcup\limits_{i=1}^k \capitalp_j^i$
\item $\left(\forall x \in \capitalp_j, \ \Phi \right)\Longrightarrow \bigwedge\limits_{i=1}^k \left(\forall x \in \capitalp_j^i, \ \Phi\right)$, and is an equivalence if $\capitalp_j= \bigcup\limits_{i=1}^k \capitalp_j^i$
\item We always have: 
\begin{multline*}
    \left(Q_1 x_1 \in \capitalp_1, \dots, Q_{j-1} x_{j-1} \in \capitalp_{j-1}, \exists x_j \in \capitalp_j, \ \Phi\right) \\ \Longleftarrow \bigvee\limits_{i=1}^k \left(Q_1 x_1 \in \capitalp_1, \dots, Q_{j-1} x_{j-1} \in \capitalp_{j-1}, \exists x_j \in \capitalp_j^i, \ \Phi\right)
    \end{multline*}
    \noindent and this is not an equivalence when at least one of the quantifiers $Q_1,\ldots,Q_{j-1}$ is $\forall$, even when $\capitalp_j= \bigcup\limits_{i=1}^k \capitalp_j^i$
\item We always have: 
\begin{multline*}
    \left(Q_1 x_1 \in \capitalp_1, \dots, Q_{j-1} x_{j-1} \in \capitalp_{j-1}, \forall x_j \in \capitalp_j, \ \Phi\right) \\ \Longrightarrow \bigwedge\limits_{i=1}^k \left(Q_1 x_1 \in \capitalp_1, \dots, Q_{j-1} x_{j-1} \in \capitalp_{j-1}, \forall x_j \in \capitalp_j^i, \ \Phi\right)
    \end{multline*}
    \noindent but this is not an equivalence when at least one of the quantifiers $Q_1,\ldots,Q_{j-1}$ is $\exists$, even when $\capitalp_j= \bigcup\limits_{i=1}^k \capitalp_j^i$.
\end{enumerate}
\end{lem}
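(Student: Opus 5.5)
We prove the four items in order; items 1 and 2 give the base case, and items 3 and 4 follow by monotonicity of the prefix quantifiers plus a counterexample for non-equivalence.

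For item 1, suppose $\bigvee_i (\exists x \in \capitalp_j^i, \Phi)$ holds. Then for some $i$ there is a witness $x \in \capitalp_j^i \subseteq \capitalp_j$, so $\exists x \in \capitalp_j, \Phi$ holds. Conversely, if $\capitalp_j = \bigcup_i \capitalp_j^i$, any witness in $\capitalp_j$ lies in some $\capitalp_j^i$.

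Item 2 is dual. If $\Phi$ holds for all $x\in\capitalp_j$, it holds in particular on each subset $\capitalp_j^i$. Conversely, under equality of the union, every $x \in \capitalp_j$ lies in some $\capitalp_j^i$, so $\Phi$ holds at $x$. One could also derive item 2 from item 1 applied to $\neg\Phi$.

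For items 3 and 4, the key observation is that quantified prefixes are monotone. If $\Psi(x_1,\dots,x_{j-1}) \Rightarrow \Psi'(x_1,\dots,x_{j-1})$ pointwise, then
\[
Q_1 x_1\in\capitalp_1,\dots,Q_{j-1}x_{j-1}\in\capitalp_{j-1},\ \Psi
\ \Longrightarrow\
Q_1 x_1\in\capitalp_1,\dots,Q_{j-1}x_{j-1}\in\capitalp_{j-1},\ \Psi'.
\]
This is proved by induction on $j-1$, since both $\forall x\in A$ and $\exists x\in A$ preserve implication. Two further facts are needed:
\begin{itemize}
\item $Q x\in A, (\Psi_1\vee\Psi_2)$ is implied by $(Qx\in A,\Psi_1)\vee(Qx\in A,\Psi_2)$, because each disjunct implies the left side by monotonicity.
\item $Q x\in A,(\Psi_1\wedge\Psi_2)$ implies $(Qx\in A,\Psi_1)\wedge(Qx\in A,\Psi_2)$, again by monotonicity.
\end{itemize}
Iterating these through the prefix gives item 3 from item 1. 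Indeed, each disjunct $\exists x_j\in\capitalp_j^i,\Phi$ implies $\exists x_j\in\capitalp_j,\Phi$ pointwise, and we then push it through the prefix. Item 4 follows from item 2 in the same way, since each conjunct is implied by the full statement. Note that only the one-sided implication is needed, so the equality of the union plays no role here.

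The main subtle step is the non-equivalence claims, which require counterexamples. For item 3, take $j=2$, $Q_1=\forall$, $\capitalp_1=\capitalp_2=[0,1]$, $\capitalp_2^1=[0,1/2]$, $\capitalp_2^2=[1/2,1]$, and $\Phi: x_2=x_1$. Then $\forall x_1\exists x_2\in[0,1], x_2=x_1$ holds. However, neither $\forall x_1 \exists x_2\in[0,1/2], x_2 = x_1$ (it fails at $x_1=1$) nor $\forall x_1 \exists x_2\in[1/2,1], x_2 = x_1$ (it fails at $x_1=0$) holds. For item 4, take $Q_1=\exists$ and $\Phi: x_2\neq x_1$ on the same sets:
\begin{itemize}
\item $\exists x_1\forall x_2\in[0,1/2], x_2\neq x_1$ holds with $x_1=1$.
\item $\exists x_1 \forall x_2\in[1/2,1], x_2\neq x_1$ holds with $x_1=0$.
\item $\exists x_1\forall x_2\in[0,1], x_2\neq x_1$ fails.
\end{itemize}
If the statement is read as requiring failure for every prefix containing a $\forall$ (resp. $\exists$), the same construction works: place the $\forall$ (resp. $\exists$) at the relevant position, let $\Phi$ involve only that variable and $x_j$, and make the other quantified variables dummies ranging over nonempty sets. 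The claim should thus be understood as "not an equivalence in general", which these examples establish.
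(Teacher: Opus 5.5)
Your proof is correct and follows essentially the paper's route: items 1 and 2 are proved directly, item 3 follows from $\capitalp_j^i\subseteq\capitalp_j$ carried through the prefix (you make the monotonicity step explicit), and non-equivalence comes from a $\forall\exists$ equality constraint split at the midpoint, which is the paper's $\forall x\in[-1,1],\ \exists y\in[-1,1],\ x+y=0$ up to a change of variables. The only additions are that you spell out the De Morgan-dual counterexample for item 4, where the paper just invokes duality, and that you note the non-equivalence claims must be read as ``not in general''.
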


\begin{proof}
The first two equivalences are obvious. Consider case (3), case (4) being implied by considering the De Morgan dual of formulas of case (3). 
Suppose there exists $i$ within $1,\dots,k$ with $Q_1 x_1 \in \capitalp_1, \dots, Q_{j-1} x_{j-1} \in \capitalp_{j-1}, \exists x_j \in \capitalp_j^i$. Then indeed, as $\capitalp_j^i \subseteq \capitalp_j$,  $Q_1 x_1 \in \capitalp_1, \dots, Q_{j-1} x_{j-1} \in \capitalp_{j-1}, \exists x_j \in \capitalp_j$ holds as well. 

Cases (3) and (4) are in general not equivalences. For instance, for case (3), consider the formula: 
\begin{equation}
\label{eq:counter3}
\forall x \in [-1,1], \ \exists y \in [-1,1], \ x+y=0
\end{equation}
Equation (\ref{eq:counter3}) is obviously true: for all $x \in [-1,1]$, set $y=-x$, which belongs to $[-1,1]$ as well. 
Now, consider the subdivision of $[-1,1]$ as $[-1,0]\cup [0,1]$, and the formulas:
\begin{eqnarray}
\forall x \in [-1,1], \ \exists y \in [-1,0], \ x+y=0 \label{eq:counter31}\\
\forall x \in [-1,1], \ \exists y \in [0,1], \ x+y=0
\label{eq:counter32}
\end{eqnarray}
Note that none of the equations (\ref{eq:counter31}) nor (\ref{eq:counter32}) are true. 
\end{proof}

\subsection{Subdividing either existentially or universally quantified parameter domain}
\label{sec:subdiv1}

In this section, we use Lemma \ref{lem:weaken} to refine the criteria  used in our paving algorithm. Indeed, depending on the direction of the implication,  the computation can be either strengthened or weakened when subdividing, leading to over and under-approximations of the reachable sets. We can then use the fact that for each condition, we have  oracles relying either on over-approximation or under-approximation to make the right choice.
Having smaller sets of values for some parameters means that we are going to have tighter bound estimates for inner and outer approximations. This may be counter-intuitive in the case of inner-approximations, but comes from a tighter evaluation of the Jacobian of function $g$, which is key when using Theorem \ref{thm:robustAE} and its extensions of \cite{hscc24,nahs25}. 
In this section, for the sake of brevity we focus on refining the criteria of Lemma \ref{lem:infirst} and Lemma \ref{lem:outfirst} and omit the  statements corresponding to the other two criteria. But we state the oracles obtained for all four criteria in Table~\ref{tab:summary_refinedoracles}.

Let us first note that cases (1) and (2) of Lemma \ref{lem:weaken}  imply that one can always subdivide the first universally quantified parameters when computing  $\R(\X^\forall,\capitalp,\G)$ (as in Lemma \ref{lem:infirst}) and the first existentially quantified parameters when computing $\R(\X^\exists,\neg \capitalp,\G^\complement)$ (as in Lemma \ref{lem:insecond}). 

For the other parameters, $\capitalp_2$ to $\capitalp_{2l}$, we must be a little more cautious. These are dealt with cases (3) and (4) of Lemma \ref{lem:weaken}. Case (3) of Lemma \ref{lem:weaken} shows that subdividing any or all of the existentially quantified parameters $\capitalp_2,\capitalp_4,\dots,\capitalp_{2l}$, and taking the union of the corresponding sets $\R(\X^\forall, \capitalp^{(i_2,\dots,i_{2l})},\G)$, using notation $\capitalp^{(i_2,\dots,i_{2l})}$ for the $i_2$th subdivision for $\capitalp_2$, $\dots$, $i_{2l}$th subdivision for $\capitalp_{2l}$, will only lead to an inner-approximation of $\R(\X^\forall, \capitalp,\G)$. 
This is spelled out in Lemma \ref{lem:infirstrefinement} in which we refine, by subdivision of parameters, the computation of $\R^-(\X^\forall,\capitalp,\G)$ that appears in Lemma \ref{lem:infirst}:

\begin{lem}
\label{lem:infirstrefinement}
Suppose $\capitalp_j\supseteq \bigcup\limits_{i_j=1}^{k_j} \capitalp_j^{i_j}$ ($j=2,\dots,2l$), then the set $\bigcup_{i_2,\dots,i_{2l}=1}^{k_2,\dots,k_{2l}}
\R^-(\X^\forall,\capitalp^{(i_2,\dots,i_{2l})},\G)$ is inner-approximating $\R^-(\X^\forall,\capitalp,\G)$,  
where $\capitalp^{(i_2,\dots,i_{2l})}$ is the sequence of quantified sets 
$$\left(\capitalp_1^\forall,(\capitalp_2^{i_2})^\exists,\capitalp_3^\forall,\dots,\capitalp_{2l-1}^\forall,(\capitalp_{2l}^{i_{2l}})^\exists\right).$$
\end{lem}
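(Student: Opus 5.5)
The claim amounts to showing that each set $\R(\X^\forall,\capitalp^{(i_2,\dots,i_{2l})},\G)$ is contained in $\R(\X^\forall,\capitalp,\G)$. Then any inner-approximation of the former is an inner-approximation of the latter, and so is a union of them. I read ``inner-approximating $\R^-(\X^\forall,\capitalp,\G)$'' in the sense the surrounding text intends: the union is a valid inner-approximation of the exact set $\R(\X^\forall,\capitalp,\G)$, and can therefore replace $\R^-(\X^\forall,\capitalp,\G)$ in Lemma \ref{lem:infirst}. I would state this reading explicitly at the start. The union need not be contained in any particular previously computed $\R^-$, so the only meaningful statement is inclusion in the exact set.

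\emph{Step 1: inclusion for a fixed index.} Fix a tuple $(i_2,\dots,i_{2l})$ and let $\zeta \in \R(\X^\forall,\capitalp^{(i_2,\dots,i_{2l})},\G)$. By Equation~(\ref{eq:R}), the formula
\[
\forall \x \in \X, \forall \p_1 \in \capitalp_1, \exists \p_2 \in \capitalp_2^{i_2}, \dots, \exists \p_{2l} \in \capitalp_{2l}^{i_{2l}}, \exists \z \in \G, f(\x,\p)-\z=\zeta
\]
holds. I would enlarge the existential domains one at a time, proceeding from the innermost $\capitalp_{2l}^{i_{2l}}$ outwards to $\capitalp_2^{i_2}$. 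At each stage I apply case (3) of Lemma \ref{lem:weaken} with $k=1$. Here $\Phi$ is the remaining inner quantified formula, which already has its existential domains enlarged, and the prefix $Q_1 x_1,\dots$ consists of the outer quantifiers, including $\forall \x\in\X$. The hypothesis $\capitalp_j^{i_j}\subseteq\capitalp_j$ is exactly what case (3) requires. Equivalently, the argument is monotonicity: a witness in $\capitalp_j^{i_j}$ is also a witness in $\capitalp_j$, and this holds under any quantifier prefix. After $l$ applications the formula with the full domains $\capitalp_j$ holds, i.e.\ $\zeta\in\R(\X^\forall,\capitalp,\G)$.

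\emph{Step 2: passing to inner-approximations and the union.} By definition, $\R^-(\X^\forall,\capitalp^{(i_2,\dots,i_{2l})},\G)\subseteq \R(\X^\forall,\capitalp^{(i_2,\dots,i_{2l})},\G)$. Combined with Step 1, each member of the union lies in $\R(\X^\forall,\capitalp,\G)$, and therefore so does the union. Case (3) of Lemma \ref{lem:weaken} shows why only inclusion, not equality, can be claimed: the subdivided formulas may all fail while the original holds.

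The main obstacle is not mathematical difficulty but bookkeeping. I need to check that case (3) can be applied at an inner position of a longer prefix, where $\Phi$ itself contains quantifiers and also the variables $\z$ and $\zeta$. I would handle this by a short induction on the number of enlarged blocks, treating the tail formula as $\Phi$ each time. I would also add a remark on the intended reading of ``inner-approximating $\R^-$'' discussed above.
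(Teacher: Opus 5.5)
Your proposal is correct and follows essentially the same route as the paper. Both apply case (3) of Lemma~\ref{lem:weaken} repeatedly, enlarging the existential domains from the innermost $\capitalp_{2l}$ outward by induction and then taking the union. You also separate the exact inclusion $\R(\X^\forall,\capitalp^{(i_2,\dots,i_{2l})},\G)\subseteq\R(\X^\forall,\capitalp,\G)$ from the final step $\R^-\subseteq\R$; this is cleaner than the paper's write-up, which states the inclusions directly between the $\R^-$ sets.
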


\begin{proof}
By case (3) of Lemma \ref{lem:weaken}, taking $j=2l$ and supposing that $\capitalp_{j}=\bigcup\limits_{i_{2l}=1}^{k_{2l}} \capitalp_j^i$, we see that, for any $\zeta$, free variables of some predicate $\Phi$, and any $i_{2l}=1,\dots,k_{2l}$: 
\begin{multline*}
\lbrace \zeta\in \R^m \mid Q_1 x_1 \in \capitalp_1, \dots, Q_{j-1} x_{j-1} \in \capitalp_{j-1}, \exists x \in \capitalp_j^i, \ \Phi \rbrace \\
\subseteq
\lbrace \zeta \in \R^m \mid Q_1 x_1 \in \capitalp_1, \dots, Q_{j-1} x_{j-1} \in \capitalp_{j-1}, \exists x \in \capitalp_j, \ \Phi\rbrace
\end{multline*}
Applying this with $j=2l$, to $x_1=\x$, $Q_1=\forall$, $x_{i+1}=p_i$ for $i=1,\dots,2l-1$, $x_{2l}=p_{2l}$ and $\Phi=\left(\exists \z \in \G, \ f(\x,\p)-z=\zeta\right)$, we get, for all $i_{2l}=1,\dots,k_{2l}$:
$$
\R^-(\X^\forall,\capitalp^{(i_{2l})},\G)\subseteq \R^-(\X^\forall,\capitalp,\G)
$$
\noindent where $\capitalp^{(i_{2l})}=(\capitalp_1^\forall,\capitalp_2^\exists,\dots,\capitalp_{2l-1}^\forall,{(\capitalp_{2l}^{i_{2l}})}^\exists)$. 

Then by induction on $j=l, l-1,\dots,1$, we successively show that,
$$\R^-(\X^\forall,\capitalp^{(i_{2j},\dots,i_{2l})},\G)
\subseteq \R^-(\X^\forall,\capitalp,\G)$$
\noindent with, 
$$\capitalp^{(i_{2j}, \dots, i_{2l})}=(\capitalp_1^\forall,\capitalp_2^\exists,\dots,\capitalp_{2j-1}^\forall,{(\capitalp_{2j}^{i_{2j}})}^\exists, \dots, \capitalp_{2l-1}^\forall,{(\capitalp_{2l}^{i_{2j}})}^\exists).$$ 

Therefore, 
$$\R^-(\X^\forall,\capitalp^{(i_2,\dots,i_{2l})},\G)
\subseteq \R^-(\X^\forall,\capitalp,\G)$$
for all $i_{2j}=1,\dots,k_{2j}$, $j=1,\dots,l$. 
Taking the union over all $j$ and $i_{2j}$ gives the result. 
\end{proof}

Dually, case (4) of Lemma \ref{lem:weaken} shows  that subdividing the universally quantified sets $\capitalp_1,\capitalp_3,\dots,\capitalp_{2l-1}$ and taking the intersection of the resulting sets  leads to an outer-approximation of $\R(\X^\exists, \capitalp,\G)$. Again, taking the intersection of the outer-approx\-imation of these sets will generally give a better overall outer-approximation. This applies to the oracles in which we need to compute $\R^+(X^\exists,\capitalp,\G)$ (as in Lemma \ref{lem:outfirst}) and $\R^+(\X^\exists,\neg \capitalp,\G^\complement)$ (as in Lemma \ref{lem:insecond}). 
This refinement of the approximation of $\R^+(X^\exists,\capitalp,\G)$  is the object of Lemma \ref{lem:outfirstrefinement}, which result and proof are dual to Lemma \ref{lem:infirstrefinement}.
\begin{lem}
\label{lem:outfirstrefinement}
Suppose $\capitalp_j=\bigcup\limits_{i_j=1}^{k_j} \capitalp_j^{i_j}$ ($j=1,\dots,2l-1$), then the set $\bigcap_{i_1,\dots,i_{2l-1}=1}^{k_1,\dots,k_{2l-1}}
\R^+(\X^\exists,\capitalp^{(i_1,\dots,i_{2l-1})},\G)$ outer-approximates $\R^+(\X^\exists,\capitalp,\G)$ 
where $\capitalp^{(i_1,\dots,i_{2l-1})}$ is the sequence of quantified sets 
$$\left((\capitalp_1^{i_1})^\forall,\capitalp_2^\exists,(\capitalp_3^{i_3})^\forall,\dots,(\capitalp_{2l-1}^{i_{2l-1}})^\forall,\capitalp_{2l}^\exists\right).$$
\end{lem}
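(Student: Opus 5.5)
The plan is to dualize the proof of Lemma~\ref{lem:infirstrefinement}, using case (4) of Lemma~\ref{lem:weaken} in place of case (3). The key observation is the following. If $\R$ is defined by a quantifier prefix containing $\forall \p_j \in \capitalp_j$, then replacing $\capitalp_j$ by a subset $\capitalp_j^{i_j}$ can only weaken the defining formula, so the set can only grow:
\[
\R(\X^\exists,\capitalp,\G) \subseteq \R(\X^\exists,\capitalp^{(i_j)},\G)
\]
for every $i_j$. This holds whatever quantifiers precede $\capitalp_j$ (here $\exists \x \in \X$, then the alternating prefix), exactly as stated in case (4). The lemma then follows by chaining these inclusions with the defining property of outer-approximations.

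First, apply case (4) with $j$ the index of the last universally quantified domain $\capitalp_{2l-1}$. Take $\Phi = \bigl(\exists \p_{2l}\in\capitalp_{2l}, \exists \z \in \G, f(\x,\p)-\z=\zeta\bigr)$, with $\zeta$ free, and let the prefix be $\exists \x \in \X, \forall \p_1, \dots, \exists \p_{2l-2}$. This gives $\R(\X^\exists,\capitalp,\G)\subseteq \R(\X^\exists,\capitalp^{(i_{2l-1})},\G)$ for each $i_{2l-1}$. Next, proceed by induction downward on the odd indices $2l-1, 2l-3,\dots,1$, each time replacing one more universal domain by a piece of its subdivision while the already subdivided ones stay fixed. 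Each step is again an instance of case (4), since the formula obtained so far still has the required shape. The result is
\[
\R(\X^\exists,\capitalp,\G)\subseteq \R(\X^\exists,\capitalp^{(i_1,\dots,i_{2l-1})},\G)
\]
for every multi-index. Since $\R(\X^\exists,\capitalp^{(i_1,\dots,i_{2l-1})},\G)\subseteq \R^+(\X^\exists,\capitalp^{(i_1,\dots,i_{2l-1})},\G)$ by definition of an outer-approximation, intersecting over all multi-indices shows that the intersection of the $\R^+$ sets contains $\R(\X^\exists,\capitalp,\G)$.

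The main subtlety is what the statement actually asserts. The computed intersection outer-approximates the exact set $\R(\X^\exists,\capitalp,\G)$, which is what the oracle of Lemma~\ref{lem:outfirst} needs. It does not outer-approximate the particular computed $\R^+(\X^\exists,\capitalp,\G)$, which is arbitrary. I would therefore read ``outer-approximates $\R^+(\X^\exists,\capitalp,\G)$'' as ``is a valid $\R^+$ for $\R(\X^\exists,\capitalp,\G)$'', mirroring the intended reading of Lemma~\ref{lem:infirstrefinement}, and state this explicitly.

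The hypothesis of equality $\capitalp_j=\bigcup_{i_j}\capitalp_j^{i_j}$ is stronger than needed, since containment suffices for the inclusions above. Covering only matters for the intersection to be a useful refinement, so no further argument is required for it.
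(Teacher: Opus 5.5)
Your proof is correct and is essentially the paper's own argument: the paper only says the proof is dual to that of Lemma~\ref{lem:infirstrefinement}, and that dual is what you give, namely case (4) of Lemma~\ref{lem:weaken} applied successively to each universally quantified domain and then an intersection over all multi-indices. Proving the inclusions for the exact sets $\R$ and using the outer-approximation property only at the end is also the right reading of the paper's looser $\R^+$ notation, which its own proof of Lemma~\ref{lem:infirstrefinement} blurs by writing $\R^-$ where it means $\R$.
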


\subsection{Combining the subdivision of existentially and universally quantified parameters}
\label{sec:subdiv2}

Ideally, one would like to subdivide all parameters,  universally and existentially quantified, for all criteria at hand. Unfortunately, as previously discussed, subdividing the universally quantified parameters  generally results in (strict) outer-approximations, case (4) of Lemma \ref{lem:weaken}, while subdividing the existentially quantified parameters  generally results in (strict) inner-approximations, case (3) of Lemma \ref{lem:weaken}. Combining both would  result in neither an outer nor an inner approximation, and thus would not be usable. 

However, we can use the idea of subdivision, but  taking midpoints of subdivisions for the values of the existential parameters, resulting in a problem which is purely universally quantified and which can then be  subdivided without over-approximation: this results in an inner-approximation, as shown in Lemma \ref{lem:infirstrefinementbis}. Dually, we can take midpoints of subdivisions of universally quantified parameters and subdivide the existentially quantified ones, this will result in an outer-approximation, as shown in Lemma \ref{lem:outfirstrefinementbis}. 

These results extend a similar idea used in the particular case of one alternation of quantifiers ~\cite{GSJ01}. 

We note $\check{\capitalp}_i$ for any point (generally the middle point) of set $\capitalp_i$, for $i=1,\dots,2l$. Then we can refine  Lemma \ref{lem:infirstrefinement} as follows: 

\begin{lem}
\label{lem:infirstrefinementbis}
Suppose $\capitalp_j= \bigcup\limits_{i_j=1}^{k_j} \capitalp_j^{i_j}$ ($j=1,\dots,2l$), then, we can compute an inner-approximation $\R^-(\X^\forall,\capitalp,\G)$ as: 
$$\bigcup_{i_2,\dots,i_{2l}=1}^{k_2,\dots,k_{2l}} \bigcap_{i_1,\dots,i_{2l-1}=1}^{k_1,\dots,k_{2l-1}}
\R^-(\X^\forall,\capitalp'^{(i_1,i_2,\dots,i_{2l-1},i_{2l})},\G),$$ 
where $\capitalp'^{(i_1,i_2,\dots,i_{2l-1},i_{2l})}$ is the sequence of quantified sets 
$$\left((\capitalp_1^{i_1})^\forall,(\check{\capitalp}_2^{i_2})^\exists,(\capitalp_3^{i_3})^\forall,\dots,(\capitalp_{2l-1}^{i_{2l-1}})^\forall,({\check{\capitalp}_{2l}^{i_{2l}}})^\exists\right).$$
\end{lem}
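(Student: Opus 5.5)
We prove the statement by combining the two refinements already obtained, Lemma \ref{lem:infirstrefinement} and the universal-subdivision principle behind Lemma \ref{lem:weaken}, cases (2) and (4), with one extra observation: replacing an existential domain by a single point only shrinks the set. The goal is to show that for every fixed choice of existential indices $(i_2,\dots,i_{2l})$, the intersection over universal indices is contained in $\R(\X^\forall,\capitalp,\G)$. Taking the union over existential indices then preserves this inclusion.

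\emph{First step: existential restriction.} Fix $(i_2,\dots,i_{2l})$. Since $\check{\capitalp}_{2j}^{i_{2j}}\in\capitalp_{2j}^{i_{2j}}\subseteq\capitalp_{2j}$, the singleton $\{\check{\capitalp}_{2j}^{i_{2j}}\}$ is a subset of $\capitalp_{2j}$. Apply case (3) of Lemma \ref{lem:weaken} with $k=1$ and subset $\{\check{\capitalp}_{2j}^{i_{2j}}\}$, once for each existential position, inductively from $j=l$ down to $1$ exactly as in the proof of Lemma \ref{lem:infirstrefinement}. This gives
\[
\R(\X^\forall,\check\capitalp^{(i_2,\dots,i_{2l})},\G)\subseteq\R(\X^\forall,\capitalp,\G),
\]
where $\check\capitalp^{(i_2,\dots,i_{2l})}=(\capitalp_1^\forall,\check{\capitalp}_2^{i_2},\capitalp_3^\forall,\dots)$. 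The existentials over singletons are trivial, so the left-hand side is defined by a purely universal formula in $\x,\p_1,\p_3,\dots,\p_{2l-1}$, followed by $\exists \z\in\G$.

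\emph{Second step: universal subdivision.} Because the formula is now purely universal in the $\p_{2j-1}$, the quantifiers $\forall\p_1\forall\p_3\cdots$ commute and can be grouped into a single universal quantifier over the product $\capitalp_1\times\capitalp_3\times\cdots\times\capitalp_{2l-1}$. This product is exactly covered by the boxes $\capitalp_1^{i_1}\times\capitalp_3^{i_3}\times\cdots$. Case (2) of Lemma \ref{lem:weaken} in its equivalence form (using $\capitalp_j=\bigcup_{i_j}\capitalp_j^{i_j}$) then yields
\[
\R(\X^\forall,\check\capitalp^{(i_2,\dots,i_{2l})},\G)=\bigcap_{i_1,\dots,i_{2l-1}}\R(\X^\forall,\capitalp'^{(i_1,\dots,i_{2l})},\G).
\]
This is the step I expect to need the most care, and I would spell it out. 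The point is that the universal quantifiers are no longer separated by existentials; that is exactly why case (4)'s obstruction disappears. For a fixed $\zeta$, membership in the left-hand side means a single universal formula holds over the whole product, which is equivalent to the conjunction over all product cells.

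\emph{Third step: passing to inner-approximations.} Each $\R^-(\X^\forall,\capitalp'^{(\cdot)},\G)\subseteq\R(\X^\forall,\capitalp'^{(\cdot)},\G)$, so the intersection of the inner-approximations is contained in the intersection of the exact sets. By the second step, that intersection equals $\R(\X^\forall,\check\capitalp^{(i_2,\dots,i_{2l})},\G)$, and by the first step this is contained in $\R(\X^\forall,\capitalp,\G)$. Taking the union over $(i_2,\dots,i_{2l})$ gives an inner-approximation of $\R(\X^\forall,\capitalp,\G)$, which therefore can serve as $\R^-(\X^\forall,\capitalp,\G)$ in Lemma \ref{lem:infirst}.

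The proof uses only the cover property for the universal domains; the existential domains need only $\capitalp_{2j}^{i_{2j}}\subseteq\capitalp_{2j}$.
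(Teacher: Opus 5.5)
Your proof is correct and follows the paper's route. You fix the existential parameters to the points $\check{\capitalp}_{2j}^{i_{2j}}$ using case (3) of Lemma \ref{lem:weaken}, which is what the paper obtains by invoking Lemma \ref{lem:infirstrefinement} with singleton subdivisions, and then apply case (2) to the resulting purely universal problem. Your version is slightly more careful than the paper's: you argue with the exact sets $\R$ before passing to $\R^-$, and you explicitly justify grouping the now-adjacent universal quantifiers into one quantifier over the product of the cells.
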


\begin{proof}
By Lemma \ref{lem:infirstrefinement}, we already know, since $\capitalp_j \supseteq \bigcup_{i_j=1}^{k_j} {\check{\capitalp}_j^{i_j}}$, that 
an inner-approximation $\R^-(\X^\forall,\capitalp,\G)$ can be computed as  
$\bigcup_{i_2,\dots,i_{2l}=1}^{k_2,\dots,k_{2l}}
\R^-(\X^\forall,\capitalp'^{(i_2,\dots,i_{2l})},\G)$, 
where $\capitalp'^{(i_2,\dots,i_{2l})}$ is the sequence of quantified sets 
$$\left(\capitalp_1^\forall,(\check{\capitalp}_2^{i_2})^\exists,\capitalp_3^\forall,\dots,\check{\capitalp}_{2l-1}^\forall,(\check{\capitalp}_{2l}^{i_{2l}})^\exists\right).$$

Now, each quantified problem $\R^-(\X^\forall,\capitalp'^{(i_2,\dots,i_{2l})},\G)$ is a purely universally quantified problem, since the existentially quantified ones can only take one value. 
By case (2) of Lemma \ref{lem:weaken}, we deduce that an inner-approximation $\R^-(\X^\forall,\capitalp,\G)$ can be computed as  
$$\bigcup_{i_2,\dots,i_{2l}=1}^{k_2,\dots,k_{2l}} \bigcap_{i_1,\dots,i_{2l-1}=1}^{k_1,\dots,k_{2l-1}}
\R^-(\X^\forall,\capitalp'^{(i_1,i_2,\dots,i_{2l-1},i_{2l})},\G).$$
\end{proof}

\begin{rem}
The intersection and union above are the set-theoretic intersection and union, and it is simpler, algorithmically, to exercise the test of Lemma \ref{lem:infirst} by checking that $0$ belongs to each $$\bigcap_{i_1,\dots,i_{2l-1}=1}^{k_1,\dots,k_{2l-1}} 
\R^-(\X^\forall,\capitalp'^{(i_1,i_2,\dots,i_{2l-1},i_{2l})},\G).$$
\noindent i.e. that the logical formula  $\bigvee\limits_{i_{2j}} \bigwedge\limits_{i_{2j-1}} \left[\{\0\} \subseteq \Rin((\capitalp_1^{i_1})^\forall,(\check{\capitalp}_2^{i_2})^\exists,\dots)\right]$ is satisfied.  
\end{rem}

And again, we can subdivide both the existentially quantified variables together with the quantified variables as follows, giving a refinement of the criterion of Lemma \ref{lem:outfirstrefinement}: 

\begin{lem}
\label{lem:outfirstrefinementbis}
Suppose $\capitalp_j= \bigcup\limits_{i_j=1}^{k_j} \capitalp_j^{i_j}$ ($j=1,\dots,2l$), then, we can compute an outer-approximation $\R^+(\X^\exists,\capitalp,\G)$ as: 
$$\bigcap_{i_1,\dots,i_{2l-1}=1}^{k_1,\dots,k_{2l-1}}
\bigcup_{i_2,\dots,i_{2l}=1}^{k_2,\dots,k_{2l}} 
\R^+(\X^\exists,\capitalp'^{(i_1,i_2,\dots,i_{2l-1},i_{2l})},\G),$$ 
where $\capitalp'^{(i_1,i_2,\dots,i_{2l-1},i_{2l})}$ is the sequence of quantified sets 
$$\left((\check{\capitalp}_1^{i_1})^\forall,({\capitalp}_2^{i_2})^\exists,(\check{\capitalp}_{i_3}^{i_3})^\forall,\dots,(\check{\capitalp}_{i_{2l-1}}^{i_{2l-1}})^\forall,({{\capitalp}_{2l}^{i_{2l}}})^\exists\right).$$
\end{lem}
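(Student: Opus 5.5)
The plan is to mirror the proof of Lemma \ref{lem:infirstrefinementbis}, exchanging the roles of the two quantifiers via the De Morgan duality already used in Lemma \ref{lem:weaken}. The argument has two stages. First, the universally quantified parameters are replaced by points, using a dual of Lemma \ref{lem:outfirstrefinement}. Second, the existentially quantified parameters are subdivided; since the problem has then become purely existential (apart from the singleton universal parameters), this subdivision is exact.

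\textbf{Stage 1.} For $j$ odd we have $\check{\capitalp}_j^{i_j}\in\capitalp_j$. Hence the singletons $\{\check{\capitalp}_j^{i_j}\}$, for $i_j=1,\dots,k_j$, are subsets of $\capitalp_j$, although they do not cover it. This matters because Lemma \ref{lem:outfirstrefinement} is stated with an equality hypothesis. I would therefore first note that its proof only needs inclusion. Case (4) of Lemma \ref{lem:weaken} reads
\[
\bigl(\dots\forall x_j\in\capitalp_j,\Phi\bigr)\Longrightarrow\bigl(\dots\forall x_j\in\capitalp_j^i,\Phi\bigr)
\]
whenever $\capitalp_j^i\subseteq\capitalp_j$. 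Applied to $Q_1\x\in\X$ with $Q_1=\exists$ and $\Phi=(\exists\z\in\G,\ f(\x,\p)-\z=\zeta)$, this gives $\R(\X^\exists,\capitalp,\G)\subseteq\R(\X^\exists,\capitalp^{(i_1,\dots)},\G)$ for each replacement of $\capitalp_j$ by a subset. Running the same backward induction on $j=2l-1,\dots,1$ as in Lemma \ref{lem:infirstrefinement}, I get
\[
\R(\X^\exists,\capitalp,\G)\subseteq\R\bigl(\X^\exists,((\check{\capitalp}_1^{i_1})^\forall,\capitalp_2^\exists,\dots,(\check{\capitalp}_{2l-1}^{i_{2l-1}})^\forall,\capitalp_{2l}^\exists),\G\bigr)
\]
for every tuple $(i_1,i_3,\dots,i_{2l-1})$. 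Intersecting over all these tuples preserves the inclusion.

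\textbf{Stage 2.} Fix the odd indices. The inner problem $\R(\X^\exists,\capitalp'^{(i_1,\cdot,\dots)},\G)$ has only existential quantifiers of substance, since a $\forall$ over a singleton is equivalent to $\exists$ over it. All quantifiers can therefore be commuted to the front, and case (1) of Lemma \ref{lem:weaken} applies with the equality hypothesis $\capitalp_j=\bigcup_{i_j}\capitalp_j^{i_j}$ for even $j$. Applying it one variable at a time gives the exact equality
\[
\R(\X^\exists,\capitalp'^{(i_1,\cdot,i_3,\dots)},\G)=\bigcup_{i_2,\dots,i_{2l}}\R(\X^\exists,\capitalp'^{(i_1,\dots,i_{2l})},\G).
\]
Finally, each $\R$ on the right is replaced by its outer-approximation $\R^+$, which is monotone under unions and intersections. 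Chaining the two stages then shows that $\R(\X^\exists,\capitalp,\G)$ is contained in the displayed $\bigcap\bigcup\R^+$. This is what it means for the displayed set to be an outer-approximation $\R^+(\X^\exists,\capitalp,\G)$.

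The main obstacle is the justification in Stage 2 that the singleton-universal problem may be treated as purely existential, so that subdividing the even-indexed domains is an equality rather than just the one-sided inclusion of case (3). Here I would argue directly from the definition in Equation (\ref{eq:R}). A quantifier over a one-point set is equivalent to substituting that point, so the formula collapses to $\exists\x\,\exists\p_2\dots\exists\p_{2l}\,\exists\z$. For such a block of existential quantifiers the disjunction over a covering of each domain is exact, by iterated use of case (1). A secondary care point is the mismatch with the hypotheses of Lemma \ref{lem:outfirstrefinement}, which I handle by reusing its proof with inclusion only.
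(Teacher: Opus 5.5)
Your proposal is correct and is essentially the paper's argument: it is the exact dual of the paper's proof of Lemma~\ref{lem:infirstrefinementbis}, which is how the paper intends this lemma to be proved. You first replace the universally quantified parameters by points using the inclusion form of Lemma~\ref{lem:outfirstrefinement} (that is, case (4) of Lemma~\ref{lem:weaken}), then subdivide the now purely existential problems exactly using case (1); your observation that Lemma~\ref{lem:outfirstrefinement} is needed with an inclusion hypothesis rather than the stated equality addresses a point the paper passes over, and you handle it correctly.
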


\begin{rem}
For exercizing the test of Lemma \ref{lem:outfirst} it is better to use the set theoretic union in the formula of Lemma \ref{lem:outfirstrefinementbis} although, it is indeed correct (as an outer-approximation), to convexify this union. 
\end{rem}

We summarize these ideas in Table~\ref{tab:summary_refinedoracles}.

\begin{table*}
 \caption{Summary of the improvements a la Sections~\ref{sec:subdiv1} and ~\ref{sec:subdiv2} for each oracle.}
    \label{tab:summary_refinedoracles}
    \centering
    \begin{tabular}{ccc}
    \toprule
    Oracle & Subdivision Section~\ref{sec:subdiv1} & Subdivision/points Section~\ref{sec:subdiv2}  \\
    
    \midrule
    $\oraclein(\X)$ (using $\capitalp$, $\G$) & $\bigvee\limits_{i_{2j}} \left[\{\0\} \subseteq \Rin(\X,\capitalp_1^\forall,(\capitalp_2^{i_2})^\exists,\dots,\G)\right]$ & $\bigvee\limits_{i_{2j}} \bigwedge\limits_{i_{2j-1}} \left[\{\0\} \subseteq \Rin(\X,(\capitalp_1^{i_1})^\forall,(\check{\capitalp}_2^{i_2})^\exists,\dots,\G)\right]$ \\
    $\oracleout(\X)$ (using $\capitalp$, $\G$) & $\bigwedge\limits_{i_{2j-1}} \left[\{\0\} \cap \Rout(\X,(\capitalp_1^{i_1})^\forall, \capitalp_2^\exists,\dots,\G) = \emptyset\right]$  & $\bigwedge\limits_{i_{2j-1}} \bigvee\limits_{i_{2j}} \left[\{\0\} \cap \Rout(\X,(\check{\capitalp}_1^{i_1})^\forall, (\capitalp_2^{i_2})^\exists,\dots,\G) = \emptyset\right]$ \\
    $\oraclein(\X)$ (using $\neg\capitalp$, $\G^\complement$) & $\bigwedge\limits_{i_{2j}} \left[\{\0\} \cap \Rout(\X,\capitalp_1^\exists,(\capitalp_2^{i_2})^\forall,\dots,\G^\complement) = \emptyset\right]$ & $\bigwedge\limits_{i_{2j}} \bigvee\limits_{i_{2j-1}} \left[\{\0\} \cap \Rout(\X,(\capitalp_1^{i_1})^\exists,(\check{\capitalp}_2^{i_2})^\forall,\dots,\G^\complement) = \emptyset\right]$ \\
    $\oracleout(\X)$ (using $\neg\capitalp$, $\G^\complement$) & $\bigvee\limits_{i_{2j-1}} \left[\{\0\} \subseteq \Rin(\X,(\capitalp_1^{i_1})^\exists, \capitalp_2^\forall,\dots,\G^\complement)\right]$ & $\bigvee\limits_{i_{2j-1}} \bigwedge\limits_{i_{2j}} \left[\{\0\} \subseteq \Rin(\X,(\check{\capitalp}_1^{i_1})^\exists, (\capitalp_2^{i_2})^\forall,\dots,\G^\complement)\right]$ \\
    \bottomrule
    \end{tabular}
\end{table*}

\begin{rem}
Guaranteed numerical (e.g. set-based) methods are intrinsically limited when the target set has zero measure, while non guaranteed ones generally provide non-solutions. This is compatible with the fact that for QE problem with well-behaved (e.g. continuously differentiable) functions appearing in predicates, paving methods like ours will converge in volume to the solution as the amount of subdivisions increases. For zero measure sets, the method will generally only provide an outer-approximation, with an empty inner-approximation. 
Some methods use inflation or weakenings, as in $\delta$-SMT  \cite{dReal}, that circumvent zero-measure solution sets, by inflating them or relaxing the quantified constraints, which  could be combined with our algorithm.
\end{rem}

\section{Implementation and benchmarks}

The four criteria of Lemma \ref{lem:infirst}, \ref{lem:outfirst}, \ref{lem:insecond} and \ref{lem:outsecond} and the refinement methods of Lemmas \ref{lem:infirstrefinementbis} and \ref{lem:outfirstrefinementbis}  have been implemented in Julia, using packages \texttt{IntervalArithmetic} and \texttt{LazySets}. 
All examples were computed using our prototype\footnote{\url{https://github.com/NuwanHerathM/PaveReach/releases/tag/HSCC26}}, on a 12th Gen Intel(R) Core(TM) i7-12700H processor with a clock rate of 3.50 GHz and a 16 Gio RAM.
In the figures, for a quantified set $\Sigma$, we represent $\Sigma^-$ in green, $\left(\Sigma^\complement\right)^-$ in cyan and $\Sigma^\Delta$ in yellow. So, an inner approximation of $\Sigma$ if given by $\Sigma^-$ in green and an outer approximation of $\Sigma$ is given by $\Sigma^- \cup \Sigma^\Delta$ in green and yellow.

In the sequel, we discuss results, first on the running example in  Section \ref{sec:protorunning}, then on a classical stability problem in control in Section \ref{sec:protostability}, robust reachability in Section \ref{ex:forwarddubins}, and collision problems in Sections \ref{sec:protocircle} and \ref{sec:protorobot}. Note that the prototype is in no way optimized, and runs purely interpreted, and not compiled. 

\subsection{Running example}

\label{sec:protorunning}

We report in Table~\ref{tab:timing_infirst_outfirst}
the evolution of the precision of the paving algorithm, when $\epsilon_\X$, the minimum size of the paving of the domain of $x$, varies. In the first two columns, we report the results for the algorithm without subdivision on the parameters.
As expected, when $\epsilon_\X$ decreases, the percentage of the volume of the  domain $\X$ that is neither declared in $\Sigma^\complement$ nor in $\Sigma$, which is the undecided part of the domain, decreases. But this  converges to about 6\% when $\epsilon_\X$ is of the order of $10^{-4}$. Indeed,  the fact that the parameter $p_1$ is still an interval of width $1/4$ induces an imprecision in the calculations of $\R^-$ and $\R^+$ in the criteria of Lemma \ref{lem:infirst} and \ref{lem:outfirst}, that prevent the algorithm from concluding. 
Figure~\ref{fig:running_example_all_epsilon} also represents the paving of domain $X$ obtained for decreasing values of $\epsilon_\X$. We  also observe that the proportion of undecided domain (in yellow) decreases. Note that we only present the results of the criteria of Lemmas \ref{lem:infirst} and \ref{lem:outfirst}, the others being equivalent in this one-dimensional example. 
\begin{figure}
    \centering
    \includegraphics[width=\linewidth]{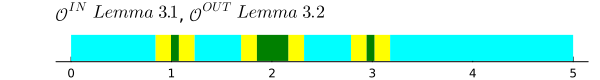}
    \includegraphics[width=\linewidth]{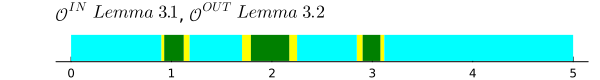}
    \includegraphics[width=\linewidth]{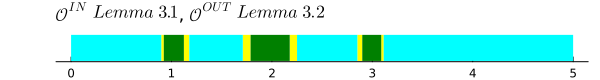}
    \caption{Example~\ref{eq:runningexample}, paving by Algorithm~\ref{alg:paver}  for $\epsilon_\X = 0.1$, $\epsilon_\X = 0.01$ and $\epsilon_\X = 0.001$ (from top to bottom).}
    \label{fig:running_example_all_epsilon}
\end{figure}

In the next columns of Table \ref{tab:timing_infirst_outfirst}, we now consider the paving algorithm   using the methods for subdividing parameters of Section \ref{sec:subdiv1}, where only existential or universal quantified parameters are  subdivided, and of Section \ref{sec:subdiv2}, where all parameters are subdivided, with a minimal subdivision size on parameters $\epsilon_\capitalp$ taken equal to the precision for the paving of $\X$, $\epsilon_\X$. 
We observe an improvement of the precision for smaller subdivision sizes, but converging to about 3\% of domain $\X$ being undecided, for $\epsilon_\X=\epsilon_\capitalp$ of about $10^{-4}$. This is because  subdivisions on parameters for  $\oraclein$ are partly ineffective here: the only parameter, $p_1$, is universally quantified, and is not going to be subdivided using Lemma \ref{lem:infirstrefinement}. Only $\oracleout$ is improved here, since Lemma \ref{lem:outfirstrefinement} allows us to subdivide parameter $p_1$. 

Finally, subdividing all quantified parameters using Lemma \ref{lem:infirstrefinementbis} and \ref{lem:outfirstrefinementbis} allows us to converge quickly to a near exact solution, also with a less significant increase of computation time. This could appear as surprising, but we believe this is due, among other things, to the evaluation of the Jacobian being simpler when half of the variables are fixed values, and to the fact that the subdivision process in $\oracleout$ loses much more time discarding subsets of $\X$ from the solution set, than with the extra refinement method of Section \ref{sec:subdiv2}, because of its more imprecise oracle. 

Tables \ref{tab:running_example_precision_subdivision}
and \ref{tab:running_example_precision_refinement} report the impact of choices of $\epsilon_\X$ and $\epsilon_\capitalp$. Parameter $p_1$ is of order 2, and variable $x$ of order 3, meaning that  with subdivisions the enclosure of the Jacobian should improve linearly for $p_1$, and quadratically for $x$. But the size of intervals in the paving algorithm for $X$  limits the  effect of parameter subdivision. Moreover, in this particular case, parameter subdivision is only effective for $\oracleout$ and not for $\oraclein$. 
\begin{table*}
    \centering
    \caption{Proportion of undecided domain and timings for Example~\ref{eq:runningexample}, with and without parameter subdivision.}
    \label{tab:timing_infirst_outfirst}
    \begin{tabular}{c|cc|cc|cc}
    \toprule
    \multirow{2}{*}{$\epsilon_\X = \epsilon_\capitalp$} & \multicolumn{2}{c} { Alg.~\ref{alg:paver} (Lemmas~\ref{lem:infirst} and~\ref{lem:outfirst})} & \multicolumn{2}{c}{Section~\ref{sec:subdiv1} (Lemmas~\ref{lem:infirstrefinement} and~\ref{lem:outfirstrefinement})} & \multicolumn{2}{c}{Section~\ref{sec:subdiv2} (Lemmas~\ref{lem:infirstrefinementbis} and~\ref{lem:outfirstrefinementbis})} \\
    & Undecided & Time (s) & Undecided  & Time (s) & Undecided & Time (s) \\
    \midrule
    $10^{-1}$ & 19\% & 0.001 & 15.6\% &0.002  & 12.5\% & 0.004  \\
    $10^{-2}$ & 7.0\%& 0.003 & 4.3\% & 0.027  & 1.2\% & 0.064  \\
    $10^{-3}$ & 6.1\% & 0.021 & 3.0\%  & 2.0  & 0.1\%  & 1.0  \\
    $10^{-4}$ & 6.0\% & 0.158 &  3.0\% & 233 & 0.0\% & 31  \\
    \bottomrule
    \end{tabular}
\end{table*}
Figure~\ref{fig:running_example_0.001_refinement} represents  the paving of domain $X$ obtained, from top to bottom, without parameter subdivision, with the approach of Section~\ref{sec:subdiv1} and with the approach of Section~\ref{sec:subdiv2}. 
\begin{figure}
    \centering
    \includegraphics[width=\linewidth]{figures/ex_3-2_runningexample_11_0.001_0.001.png}
    \includegraphics[width=\linewidth]{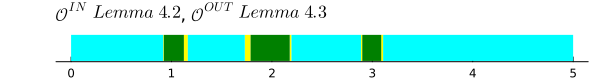}
    \includegraphics[width=\linewidth]{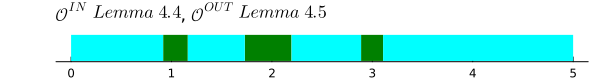}
    \caption{Example~\ref{eq:runningexample}, paving for $\epsilon_\X = \epsilon_\capitalp = 0.001$.}
    \label{fig:running_example_0.001_refinement}
\end{figure}

\begin{table}
    \centering
    \caption{Proportion of the undecided domain for the running example using Lemma~\ref{lem:infirstrefinement} and Lemma~\ref{lem:outfirstrefinement}.}
    \label{tab:running_example_precision_subdivision}
    \begin{tabular}{c|cccc}
    \toprule
    \diagbox{$\epsilon_\X$}{$\epsilon_\capitalp$} & $10^{-1}$ & $10^{-2}$ & $10^{-3}$ & $10^{-4}$ 
    \\
    \hline
    $10^{-1}$ & 15.6\% & 15.6\% & 15.6\% & 15.6\% 
\\
    $10^{-2}$ & 4.3\% & 4.3\% & 4.3\% & 4.3\% 
    \\
    $10^{-3}$ & 3.2\% & 3.0\% & 3.0\% & 3.0\% 
    \\
    $10^{-4}$ & 3.2\% & 3.0\% & 3.0\% & 3.0\% 
    \\
    \bottomrule
    \end{tabular}
\end{table}

\begin{table}
    \centering
    \caption{Proportion of the undecided domain for the running example using Lemma~\ref{lem:infirstrefinementbis} and Lemma~\ref{lem:outfirstrefinementbis}.}
    \label{tab:running_example_precision_refinement}
    \begin{tabular}{c|cccc}
    \toprule
    \diagbox{$\epsilon_\X$}{$\epsilon_\capitalp$} & $10^{-1}$ & $10^{-2}$ & $10^{-3}$ & $10^{-4}$ \\
    \hline
    $10^{-1}$ & 12.5\% & 12.5\% & 12.5\% & 12.5\% \\
    $10^{-2}$ & 2.1\% & 1.2\% & 1.2\% & 1.2\% \\
    $10^{-3}$ & 1.0\% & 0.2\% & 0.1\% & 0.1\% \\
    $10^{-4}$ & 0.9\% & 0.1\% & 0.0\% & 0.0\% \\
    \bottomrule
    \end{tabular}
\end{table}

\subsection{Stability of a controller}

\label{sec:protostability}

In this example from \cite{intervalconstraintsolving} (\emph{GarloffGraf1} extracted from \cite[Example~1]{garloffgraf}), we consider the static output-feedback problem presented in \cite{QEcontrol}. Given a linear ODE, the problem is to find a proportional controller, given by coefficients $v$ and $w$, that stabilizes it. It leads to looking for the parameters $v$ and $w$ which ensure the stability of the polynomial
$p(s) = s^3 + vs^2 + (w - 5v -13) s + w.$

Using the Li\'{e}nard-Chipart criterion, it is transformed into the equivalent system
$
	\{
		v, w > 0, 
		- 5v^2 - 13v + vw - w > 0
	\}.
$
The paving of 
$	\Sigma = \{ v \in [2,10], w \in [40,50] \, | \, \exists z \in [10^{-7}, 100], 
    - 5v^2 - uv + vw - w - z = 0 \},$
obtained in 25 ms using Algorithm \ref{alg:paver} without subdivision and for $\epsilon_\X = 0.1$,  with 1.3\% of the domain being undecided, is presented in Figure~\ref{fig:garloffgraf}.
Note that the more complex quantified problem, obtained for stabilizing an (similar) uncertain ODE,  
$\{ v \in [2,10], w \in [40,50] \, | \, \forall u\in [12.5,13.5], \ \exists z \in [10^{-7}, 100], 
    - 5v^2 - 13v + vw - w - z = 0 \}$, is solved in about the same time, without subdivision either. 
\begin{figure}
    \centering
    \includegraphics[width=\linewidth,height=3.5cm]{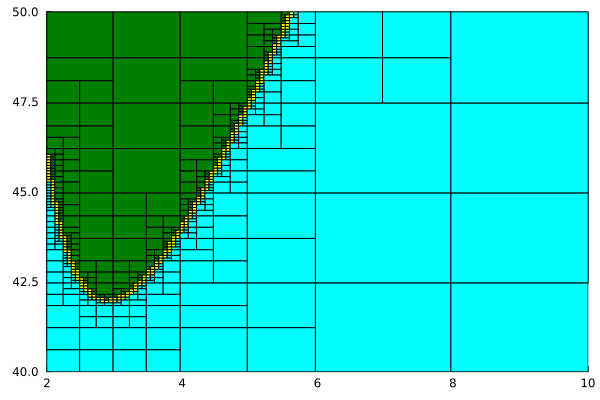}
    \caption{\emph{GarloffGraf1}, paving for $\epsilon_\X = 0.1$.}
    \label{fig:garloffgraf}
\end{figure}

\subsection{Dubins vehicle and generalized reachability}

\label{ex:forwarddubins}

We consider again Example 1 of \cite{hscc24}, used for sample calculation  in Section \ref{ex:Dubbins}. 
We are now interested in approximations of sets  of the form `` $(x,y,\theta)$ such that there exists a control $a$, there exists an initial condition $(x_0,y_0,\theta_0)$, for all perturbations $b_1$, there exists a time $t$ at which $(x,y,\theta)=\phi(t;(x_0,y_0,\theta_0),a,b_1)$''. 
We use  as in \cite{hscc24} an approximate model of the flow of the system, 
and are thus looking at the solution set of the quantified problem: 
\begin{multline*}
\Sigma =
\left\lbrace (x,y,\theta) \in [-2,4] \times [-4,4]\times [-0.5,0.5] \, | \, \right. \\
\exists \epsilon_7 \in [-1,1], \ \exists \epsilon_1 \in [-10,10], \ \exists \epsilon_4 \in [-10,10], \ \exists \epsilon_6 \in [-20,20], \\ \forall \epsilon_2 \in [-1,1],  \ \exists \epsilon_3 \in [-1,1], \ \exists \epsilon_5 \in [-1,1], \ \exists t \in [0,2], \\
x=0.1\epsilon_1+(1+0.01\epsilon_2)t+1.31 \ 10^{-7} \epsilon_3 t^2
\ \wedge 
y=0.1\epsilon_4+(0.01\epsilon_6\\ \left. +0.01\epsilon_7 t) t
+(0.005 \epsilon_5) t^2 \ \wedge 
\theta(t)=0.01\epsilon_6+0.01\epsilon_7 t
\right\rbrace
\end{multline*}
\noindent with much larger initial sets than the similar example  in \cite{hscc24}. 

Then, for $\epsilon_\X = \epsilon_\capitalp = 0.01$ with the subdivision method of Section \ref{sec:subdiv1}, we get $[-0.96, 2.96] \subseteq R_{\exists \forall \exists}(x) \subseteq [-1.01, 3.01]$, computed in 2.6 ms, with only 1.5\% of the domain being undecided ; $[-0.99, 0.99] \subseteq R_{\exists \forall \exists}(y) \subseteq [-1.94, 1.94]$, in 22 ms with 24\% of the domain undecided ; and $[-0.19, 0.19] \subseteq R_{\exists \forall \exists}(\theta) \subseteq [-0.23, 0.23]$, in 1.3 ms, with 7.8\% of the domain being undecided, see Figure~\ref{fig:dubins_individual_components}.

\begin{figure}
    \centering
    \includegraphics[width=\linewidth]{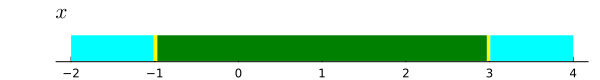}
    \includegraphics[width=\linewidth]{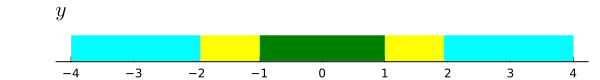}
    \includegraphics[width=\linewidth]{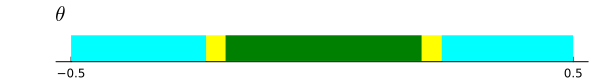}
    \caption{Dubins example, paving of $\Sigma$ with $\epsilon_\X = \epsilon_\capitalp = 0.01$.}
    \label{fig:dubins_individual_components}
\end{figure}

\subsection{Circle collision problem}

\label{sec:protocircle}

We consider the simple problem of collision in the context of camera motion planning from~\cite{intervalconstraintsolving}. 
The problem to solve is, for $r_1 = 2.5$ and $d_1 = 0.5$, to determine $(x,y) \in [-5,5]^2$ such that:
$$
\forall t \in [-\pi,\pi], \ \sqrt{(r_1 \sin t - x)^2 + (r_1 \cos t - y)^2} \geq d_1.
$$

The computation of the paving presented in Figure~\ref{fig:circle_collision} took 25s with only 6.0\% of the domain being undecided, for $\epsilon_\X = \epsilon_\capitalp = 0.1$,  using the refinement algorithm of Section \ref{sec:subdiv2}. 
For $\epsilon_\X = \epsilon_\capitalp = 0.01$, we reach 0.4\% of the domain being undecided in $3100$s.
\begin{figure}
    \centering
    \includegraphics[width=\linewidth,height=3.5cm]{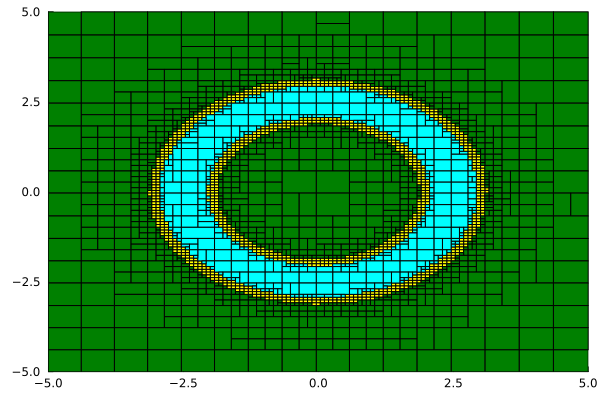}
    \caption{Circle collision problem, paving  with $\epsilon_\x = \epsilon_\capitalp = 0.1$, using Lemma~\ref{lem:infirstrefinementbis} and Lemma~\ref{lem:outfirstrefinementbis}.}
    \label{fig:circle_collision}
\end{figure}

\subsection{Robot collision problem}

\label{sec:protorobot}

In this problem extracted from \cite{intervalconstraintsolving}, a mobile robot arm is composed of three
segments of respective lengths $d_1$, $d_2$ and $d_3$. 
The trajectory of the robot’s hand $P(t)$ is determined by three
angle functions $\alpha_1(t)$, $\alpha_2(t)$ and $\alpha_3(t)$. The problem is to find all points in the
2D space that do not collide with the robot’s hand, i.e. computing all the $(x,y)$ in $[0,5]$
 such that: 
$$
\forall t\in [0,2], \ \sqrt{(x-P_x(t))^2+(y-P_y(t))^2}\geq d
$$
\noindent with $\alpha_1(t) =  t+\pi/4 $, $\alpha_2(t) =  2t-1$, $\alpha_3(t) = 0.2t+0.1$ and 
$$
\begin{array}{rcl}
P_x(t) & = & d_1sin \alpha_1(t)+d_2sin(\alpha_1(t)+\alpha_2(t)-\pi) \\
            &&+d_3sin(\alpha_1(t)+\alpha_2(t)+\alpha_3(t)) \\
P_y(t) & = & d_1cos\alpha_1(t)+d_2cos(\alpha_1(t)+\alpha_2(t)-\pi) \\
            &&+d_3cos(\alpha_1(t)+\alpha_2(t)+\alpha_3(t)) 
\end{array}
$$
\noindent for $d=0.5$, $d_1=1$, $d_2=2$, $d_3=1$.

Using the refinement of Section \ref{sec:subdiv2}, we compute a paving of the solution set in 5.3 s with 9.2\% of the  domain being undecided, for $\epsilon_\X = \epsilon_\capitalp = 0.1$, and in 291 s with 0.8\% of the domain being undecided for $\epsilon_\X = \epsilon_\capitalp = 0.01$. 
Algorithm \ref{alg:paver} (i.e. with no parameter subdivision), results in 93\% of the domain being undecided, computed in 165 ms; using the subdivisions of Section~\ref{sec:subdiv1}, we end up with 52\% of the domain being undecided, in 3.1 s.
Here, the problem has no existentially quantified parameter, so that only the refinement  of Section \ref{sec:subdiv2} is able to characterize a non empty approximation $\Sigma^-$, see Figure~\ref{fig:robot_collision_0.1} for a visual comparison of the  refinements. 
\begin{figure}
    \centering
    \includegraphics[width=0.33\linewidth]{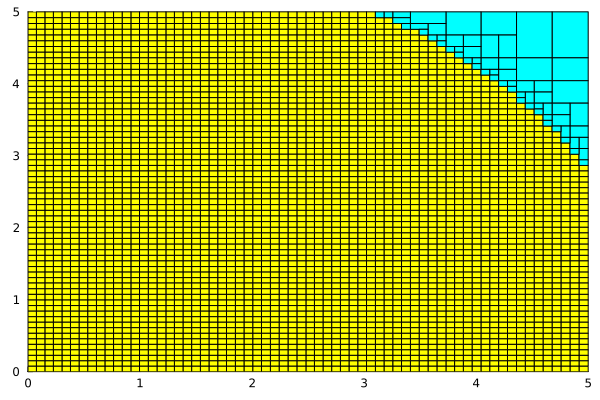}%
    \includegraphics[width=0.33\linewidth]{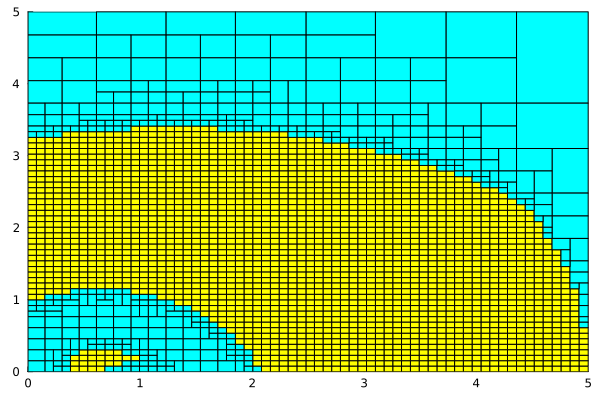}%
    \includegraphics[width=0.33\linewidth]{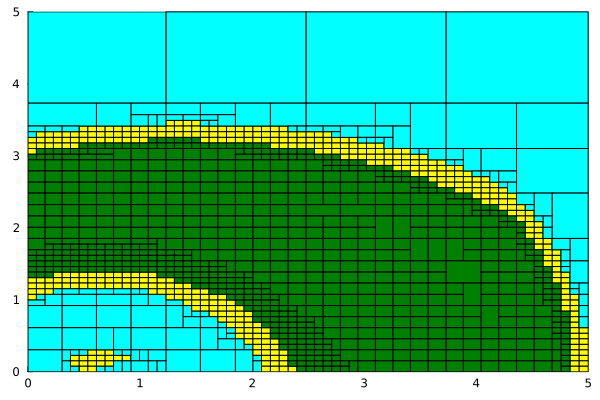}
    \caption{Robot collision problem, $\epsilon_\X = \epsilon_\capitalp = 0.1$, using from left to right, Lemmas~\ref{lem:infirst} and~\ref{lem:outfirst}, \ref{lem:infirstrefinement} and~\ref{lem:outfirstrefinement}, and \ref{lem:infirstrefinementbis} and~\ref{lem:outfirstrefinementbis}.}
    \label{fig:robot_collision_0.1}
\end{figure}

\section{Conclusion}

We proposed a computation of inner and outer approximations of global solutions to general quantified constraint satisfaction problems, and discussed some of the many choices that impact the performance and the quality of the solution set obtained. 
Future work includes extension to fully general quantifier elimination problems, handling also disjunctions of predicates, dealing with more applications, such as controller synthesis, and improving the parameter subdivision strategies. 
We also plan to experiment this general framework using higher-order methods instead of interval methods, to increase both precision and performance. 

\section*{Acknowledgsments}
This work was partially supported by
the SAIF project, funded by the “France 2030” government
investment plan managed by the French National Research
Agency, under the reference ANR-23-PEIA-0006.

\bibliographystyle{ACM-Reference-Format}
{\footnotesize \bibliography{biblio}}

\end{document}